\documentclass[letterpaper,journal]{IEEEtran}
\usepackage[T1]{fontenc}
\usepackage[cmintegrals]{newtxmath}
\usepackage{newtxtext}

\AtBeginDocument{
    \setlength{\columnsep}{0.25in}
}

\IEEEoverridecommandlockouts

\usepackage{amsmath}
\usepackage{subeqnarray}
\usepackage{cases}
\usepackage{cite}
\usepackage{graphicx}
\usepackage{tabularx,booktabs}
\usepackage{diagbox}
\usepackage[ruled,linesnumbered]{algorithm2e}
\usepackage{algorithmic}
\usepackage{esint}
\usepackage{cuted}
\usepackage{stfloats}
\usepackage{multirow}
\usepackage{ntheorem}
\usepackage{threeparttable}
\usepackage{bbm}
\usepackage{tabu}
\usepackage{bbding}
\usepackage{tikz}
\usepackage{xcolor}
\usepackage{makecell}
\usepackage{subfigure}
\usepackage{url}

\theoremseparator{:}
\newtheorem{proposition}{\textcolor{black}{Proposition}}
\newtheorem{lemma}{Lemma}
\newtheorem{theorem}{Theorem}

\theoremstyle{nonumberplain}
\theorembodyfont{\normalfont}
\newtheorem{proof}{Proof}

\newcommand{\posgt}[1]{\mathbf{Y}_{#1}}
\newcommand{\posest}[1]{\hat{\mathbf{Y}}_{#1}}

\definecolor{newextractedpurple}{RGB}{127,0,127}
\newcommand{\FullCov}{\scalebox{1.0}{\textmd{\Checkmark}}}
\newcommand{\PartCov}{\(\scalebox{1.0}{$\triangle$}\)}
\newcommand{\NoCov}{\(\scalebox{1.0}{$\times$}\)}

\def\BibTeX{{\rm B\kern-.05em{\sc i\kern-.025em b}\kern-.08em
    T\kern-.1667em\lower.7ex\hbox{E}\kern-.125emX}}

\begin{document}

\setlength{\columnsep}{0.25in}

\title{Task-Oriented Communications for Edge-Assisted Multi-View Localization}

\author{Zhengru~Fang$^{\dagger}$, 
        Huanhuan~Lou$^{\dagger}$,
        Senkang~Hu,
        Yihang~Tao, 
        Zongdian~Li,~\IEEEmembership{Member,~IEEE},
        Yiqin~Deng,~\IEEEmembership{Member,~IEEE},
        Jingjing~Wang,~\IEEEmembership{Senior Member,~IEEE},
        and~Yuguang~Fang,~\IEEEmembership{Fellow,~IEEE}
\thanks{$^{\dagger}$Z. Fang and H. Lou contributed equally to this work (co-first authors).}
\thanks{Z. Fang is with the Department of Electronic and Computer Engineering, The Hong Kong University of Science and Technology, Hong Kong (e-mail: eezfang@ust.hk).}
\thanks{H. Lou, S. Hu, Y. Tao, and Y. Fang are with the Hong Kong JC STEM Lab of Smart City and the Department of Computer Science, City University of Hong Kong, Hong Kong (e-mail: \{huanhulou2, senkanghu2-c, yihangtao2-c\}@my.cityu.edu.hk; my.fang@cityu.edu.hk).}
\thanks{Z. Li is with Zhejiang University, Hangzhou 310027, China (e-mail: zongdianli@zju.edu.cn).}
\thanks{Y. Deng is with the School of Data Science, Lingnan University, Tuen Mun, Hong Kong, China (e-mail: yiqindeng@ln.edu.hk).}
\thanks{J. Wang is with the School of Cyber Science and Technology, Beihang University, Beijing, China (e-mail: drwangjj@buaa.edu.cn).}
\thanks{A preliminary version of this work was presented in part at the IEEE Global Communications Conference (GLOBECOM), 2025 \cite{fang2025ovib}.}
}

\maketitle

\begin{abstract}
Unmanned aerial vehicles (UAVs) and unmanned ground vehicles (UGVs) tend to lose satellite positioning in urban canyons, indoor facilities, and jammed or spoofed environments, and vision-based matching with a geo-tagged database remains one of the few sources for absolute positioning. However, onboard computation and energy budgets usually cannot host that database and its matching pipeline, so localization is often offloaded to edge servers or roadside units over wireless links whose throughput varies along the route. Such edge-assisted multi-view localization must therefore jointly decide when to offload, which views and semantic rate to transmit, and which client to serve under changing wireless and edge resources. In this paper, we present a network-adaptive task-oriented communication framework that combines scalable orthogonality-regularized variational information bottleneck (O-VIB) encoding, value-of-information (VOI)-guided request control, and VOI-weighted Lyapunov scheduling. We design an O-VIB model that spans prefix lengths from 8 to 128 latent dimensions through nested importance-ordered prefixes and covers the available view subsets through view masks. Edge assistance is requested only when the predicted reduction in localization risk exceeds the communication and service cost. We have demonstrated that under a matched per-route traffic budget, VOI-guided control can lower the mean route error by 24.8\% and the p95 route error by 31.0\% relative to budgeted periodic offloading on CARLA multi-view UAV data. In indoor UAV and UGV real-world experiments, the deployed pipeline can lower the mean position error by 28.0\% and 14.4\% relative to uncompressed all-view CLIP retrieval while cutting descriptor traffic by 98.6\% and 98.2\%, at 0.145 KB per request. Under high congestion, value-aware shaping can lower the edge-side p95 latency for the top-10\% high-value requests by 76.2\%, from 137.7 ms to 32.8 ms.
\end{abstract}

\begin{IEEEkeywords}
Task-oriented communication, information bottleneck (IB), value of information, visual localization, unmanned aerial vehicles (UAVs).
\end{IEEEkeywords}

\section{Introduction}
Unmanned aerial vehicles (UAVs) and unmanned ground vehicles (UGVs) are frequently deployed in indoor warehouses, large industrial facilities, campuses, urban canyons, and outdoor inspection routes, where satellite positioning is often weak, intermittent, or actively spoofed. Under these severe conditions, vision-based localization may have become the only remaining source for absolute positioning, since camera views carry geometric and appearance cues that complement radio, inertial, and marker-based positioning. Recent visual-inertial simultaneous localization and mapping (SLAM), digital-twin positioning, and visual place recognition systems confirm this capability in environments where satellite positioning is unreliable \cite{Zhang2022CVIDS,Gao2024DigitalTwin,Wang2022Pavement,AnyLoc2024,MixVPR2023,EigenPlaces2023}. Unlabeled adaptation of vision models on movable agents offers a complementary way to handle changing observations \cite{ma2026learning}, while multi-camera configurations retain a usable view under occlusion, repetitive layouts, or rapid viewpoint changes. Running this pipeline onboard is nevertheless expensive, because the geo-tagged database and matching stage often exceed the computation and energy budget on a small platform. Such platforms therefore offload localization inference to an edge server or roadside unit, where task-oriented communication can reduce the transmitted visual payload \cite{Cao2024VisualSLAM,Yuan2024Split,10480247}.

\begin{figure}[t]
  \centering
  \includegraphics[width=\columnwidth]{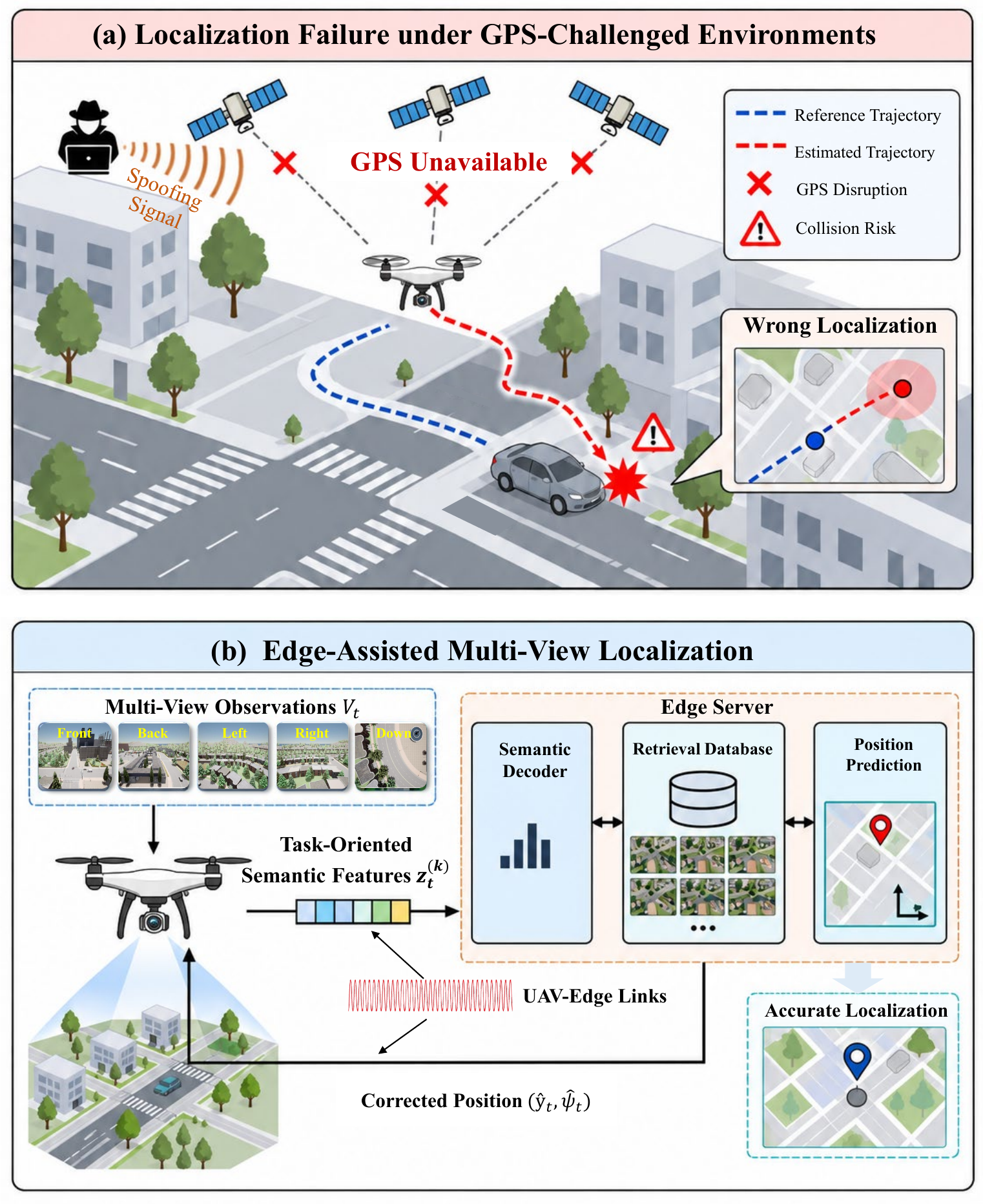}
  \caption{Motivating scenario and edge-assisted multi-view localization pipeline. Panel (a) illustrates localization drift and collision risk when GNSS is unavailable or unreliable. Panel (b) shows the proposed pipeline: the mobile client observes multiple views, selects a task-oriented semantic representation, transmits it over the client-edge link, and receives a corrected pose after edge decoding and database-assisted localization.}
  \label{fig:scenario}
  \vspace{-2mm}
\end{figure}

Take the scenario in Fig.~\ref{fig:scenario} as an example. When GNSS is unavailable or spoofed, local motion estimates drift from the true trajectory and navigation risk grows. The proposed system therefore keeps the fast local loop on the mobile platform, while an edge request carries only a selected view-rate semantic representation. The edge decodes that representation, searches the geo-tagged database, estimates the pose, and returns a correction. Because the uplink and the edge server are shared among multiple clients, this separation ties semantic payload selection to request timing and to multi-client service ordering.

This pipeline therefore creates four coupled challenges. First, \emph{what to transmit}: multi-view visual embeddings may contain substantial cross-view redundancy, while localization depends mainly on location-discriminative and task-relevant features. Second, \emph{how much to transmit}: wireless throughput and endpoint energy vary over time, so a fixed high-rate representation becomes infeasible under bandwidth bottlenecks, whereas a fixed low-rate representation underutilizes available bandwidth when channel conditions improve. Third, \emph{when to transmit}: local odometry can maintain short-term pose estimates, and an edge request is valuable mainly when uncertainty growth justifies the cost. Fourth, \emph{which request to serve}: many mobile clients may contend for the same uplink and edge GPU, and first-come-first-served or rate-only scheduling serves them without considering the localization error of each request. These decisions are tightly coupled, since changing the request time also changes the useful view subset, the required semantic rate, and the urgency of edge service.

Existing approaches address only fragments of this problem. Conventional image and video codecs such as JPEG, H.264, H.265, and WebP optimize perceptual fidelity rather than localization utility, so aggressive compression may discard task-critical geometric cues \cite{wallace1992jpeg,H264,bossen2012hevc,10605825}. In parallel, recent task-oriented and information-bottleneck based semantic communication systems reduce task-irrelevant payloads, yet many of them train a single-rate encoder or assume a periodic upload model, without adapting view mode, semantic rate, and request timing together \cite{Shao2024EdgeVideo,Wei2023FederatedSem,Yuan2024Split,10480247,fang2025ton}. A third line of work on information-aware status updating and edge scheduling highlights the importance of freshness, correctness, and service ordering, but it usually assumes fixed-size state packets or leaves scheduling decoupled from semantic compression \cite{fang2022age,Chen2023AoII,Wang2024AoIURLLC,Feng2024AoIHARQ,Li2023ESMO,SHEPHERD2023,AlpaServe2023}. To the best of our knowledge, no existing study combines state-triggered localization requests, view- and rate-adaptive semantic encoding, and value-aware scheduling under shared wireless and edge-computing resources.

Motivated by these observations, we propose a unified task-oriented communication framework for edge-assisted multi-view localization under GPS-denied environments. An orthogonality-regularized variational information bottleneck (O-VIB) encoder addresses what and how much to transmit, and the task-level value of information (VOI) addresses when to transmit and which request to serve. Both components serve the same goal of reducing the localization error under limited communication and edge resources. The main contributions are summarized as follows:
\begin{itemize}
    \item We develop a scalable multi-view O-VIB encoder. Through nested rate dropout and view masks, one model supports latent prefixes from 8 to 128 dimensions and all four UAV view modes. A variational rate term suppresses task-irrelevant latent coordinates, and an orthogonality penalty keeps the posterior-mean projection well-conditioned. At 8 KB/s, encoding and sending a 32-dimensional O-VIB code takes 50.4 ms on a Jetson Orin NX, whereas the fastest image codec takes 2.92 s.
\item We design a VOI-guided online request-and-mode selection policy. A client requests edge assistance only when the predicted reduction in localization risk is worth the costs for the communication, delay requirement, and energy consumption, and it selects the view-rate action with the largest net VOI. Under a matched traffic budget in CARLA, this policy can lower the mean and p95 route errors by 24.8\% and 31.0\% relative to budgeted periodic offloading.
    \item We formulate multi-client edge scheduling as a VOI-weighted drift-plus-penalty problem. The queue-stability and \(O(1/V)\) utility-gap result is conditional on all assumptions of Theorem~\ref{thm:voi_scheduler}. We separately evaluate a shaped scheduler with waiting-age and deadline terms, which can lower the edge-side p95 latency of the top-10\% high-value requests by 76.2\% under high congestion, from 137.7 ms to 32.8 ms.
\item We open-source our code and a multi-view UAV dataset of 357,690 five-view frames from eight CARLA towns, with RGB, depth, semantic segmentation, and six-degree-of-freedom pose labels. In GPS-denied indoor experiments with a UAV and a UGV under motion-capture ground truth, our pipeline can lower the mean position error by 28.0\% and 14.4\% relative to uncompressed all-view CLIP retrieval while cutting the descriptor traffic by 98.6\% and 98.2\%.
\end{itemize}

\section{Related Work}\label{sec:related}

\subsection{Task-Oriented and Semantic Communication}
Task-oriented, or semantic, communication transmits only the information needed by the receiver's task. Shao \textit{et~al.}~\cite{Shao2024SemanticTheory} provide an information-theoretic framework, while Zhang \textit{et~al.}~\cite{Zhang2025Intellicise} survey open problems. Information-bottleneck features have been used for edge video analytics, federated learning, and timely modeling~\cite{Shao2024EdgeVideo,Wei2023FederatedSem,fang2025ton,Meng2024CrossSystem,Wang2023SemanticSensing}. Yuan \textit{et~al.}~\cite{Yuan2024Split} and Furutanpey \textit{et~al.}~\cite{10480247} study scalable or lightweight split encoders, while CASVA~\cite{CASVA2022} and ILCAS~\cite{Wu2024ILCAS} adapt video configurations to network conditions. Nested dropout and Matryoshka representation learning train one embedding whose leading dimensions form shorter representations~\cite{rippel2014nested,kusupati2022mrl}. These works do not jointly adapt view mode, semantic rate, and request time; our VOI controller makes these decisions online under localization loss.

\subsection{Edge-Assisted Visual Localization}
When satellite signals are unavailable, visual localization provides another source for absolute positioning. Collaborative visual-inertial SLAM, digital twins, and coded pavement references support multi-agent, indoor, and road positioning~\cite{Zhang2022CVIDS,Gao2024DigitalTwin,Wang2022Pavement}. Visual place recognition and learned matching improve retrieval under viewpoint and appearance changes~\cite{MixVPR2023,EigenPlaces2023,AnyLoc2024,Izquierdo2024OT,VPRLightGlue2023,VPRGlueStick2023}. Recent collaborative-perception work also studies adaptive mutual-view information against adversarial agents and feed-forward reconstruction from uncalibrated driving views~\cite{tao2026learning,tao2026fruc}. These methods target perception security or scene completion, whereas our problem is task-oriented localization over a constrained link. Cao \textit{et~al.}~\cite{Cao2024VisualSLAM} offload collaborative visual SLAM to reduce onboard cost, but edge-assisted localization generally assumes complete observations and periodic fixed-rate uploads. We instead select views, latent rate, and request time from predicted localization value.

\subsection{Information-Aware Status Updating and VOI}
Status-update systems decide whether a new update is worth sending. Age of information and age of incorrect information have guided transmission design for massive access, UAV, and platooning networks~\cite{fang2022age,Wang2024AoIURLLC,Feng2024AoIHARQ,Abedi2024SAoI,Chen2023AoII}. Wang \textit{et~al.}~\cite{Wang2023AdaptiveVideo} use deep reinforcement learning to decide when to upload for edge video analytics. Online robust control has likewise addressed system and channel uncertainty in low-altitude UAV swarms~\cite{11449237}. These works mainly update fixed-size state or control information. Our VOI estimate instead decides whether to request edge assistance, which views and rate to send, and how the edge orders requests according to localization loss.

\subsection{Multi-Client Edge Scheduling for Inference}
A shared edge server decides which request to serve first. Prior work schedules video frames, cached models, and concurrent DNN jobs~\cite{Li2023ESMO,Ekya2022,Padmanabhan2023Gemel,Khani2023RECL,Nan2023CloudEdge,FaaSLearner2026,SHEPHERD2023,REEF2022,AlpaServe2023}. Surveys of mobile edge intelligence and distributed learning for autonomous swarms further identify placement, communication overhead, and unreliable links as common deployment constraints~\cite{10835069,11106824}; lightweight federated learning combines pruning, quantization, and power control under delay and energy budgets~\cite{11104936}. These schedulers and learning systems do not rank requests by their expected localization improvement. We therefore combine Lyapunov backlog control with predicted task value under wireless and computing budgets. Table~\ref{tab:related_work_comparison} compares the proposed framework with representative works.

\begin{table*}[t]
\centering
\caption{Feature coverage across representative works.}
\label{tab:related_work_comparison}
\footnotesize
\setlength{\tabcolsep}{2.5pt}
\renewcommand{\arraystretch}{1.12}
\begin{tabular*}{\textwidth}{@{\extracolsep{\fill}}l*{9}{c}@{}}
\toprule
\textbf{Feature}
& \cite{Shao2024EdgeVideo}
& \cite{Yuan2024Split}
& \cite{CASVA2022,Wu2024ILCAS}
& \cite{MixVPR2023,EigenPlaces2023,AnyLoc2024}
& \cite{Wang2023AdaptiveVideo}
& \cite{Cao2024VisualSLAM}
& \cite{SHEPHERD2023,REEF2022,AlpaServe2023}
& \cite{fang2025ovib}
& \textbf{Proposed work} \\
\midrule
Task-oriented compression & \FullCov{} & \FullCov{} & \PartCov{} & \NoCov{} & \PartCov{} & \NoCov{} & \NoCov{} & \FullCov{} & \textbf{\FullCov{}} \\
Variable-rate representation & \PartCov{} & \FullCov{} & \FullCov{} & \NoCov{} & \PartCov{} & \NoCov{} & \NoCov{} & \NoCov{} & \textbf{\FullCov{}} \\
Request control & \NoCov{} & \NoCov{} & \PartCov{} & \NoCov{} & \FullCov{} & \PartCov{} & \NoCov{} & \NoCov{} & \textbf{\FullCov{}} \\
View selection & \NoCov{} & \NoCov{} & \PartCov{} & \NoCov{} & \NoCov{} & \NoCov{} & \NoCov{} & \NoCov{} & \textbf{\FullCov{}} \\
Coarse-to-fine retrieval & \NoCov{} & \NoCov{} & \NoCov{} & \PartCov{} & \NoCov{} & \PartCov{} & \NoCov{} & \NoCov{} & \textbf{\FullCov{}} \\
Uncertainty feedback & \NoCov{} & \NoCov{} & \NoCov{} & \NoCov{} & \FullCov{} & \PartCov{} & \NoCov{} & \NoCov{} & \textbf{\FullCov{}} \\
Multi-client scheduling & \NoCov{} & \NoCov{} & \PartCov{} & \NoCov{} & \NoCov{} & \PartCov{} & \FullCov{} & \NoCov{} & \textbf{\FullCov{}} \\
Edge-testbed validation & \NoCov{} & \FullCov{} & \FullCov{} & \NoCov{} & \NoCov{} & \FullCov{} & \FullCov{} & \PartCov{} & \textbf{\FullCov{}}  \\
\bottomrule
\end{tabular*}
\vspace{0.3em}
\begin{minipage}{\textwidth}
\footnotesize \textbf{Note:} Columns are identified by citation. \FullCov{} denotes fully addressed, \PartCov{} denotes partially addressed or conditionally evaluated, and \NoCov{} denotes not addressed.
\end{minipage}
\vspace{-1em}
\end{table*}
\renewcommand{\arraystretch}{1.0}

\section{System Overview}\label{sec:system}

\subsection{System Architecture}

\begin{figure*}[t]
  \centering
  \includegraphics[width=0.98\textwidth]{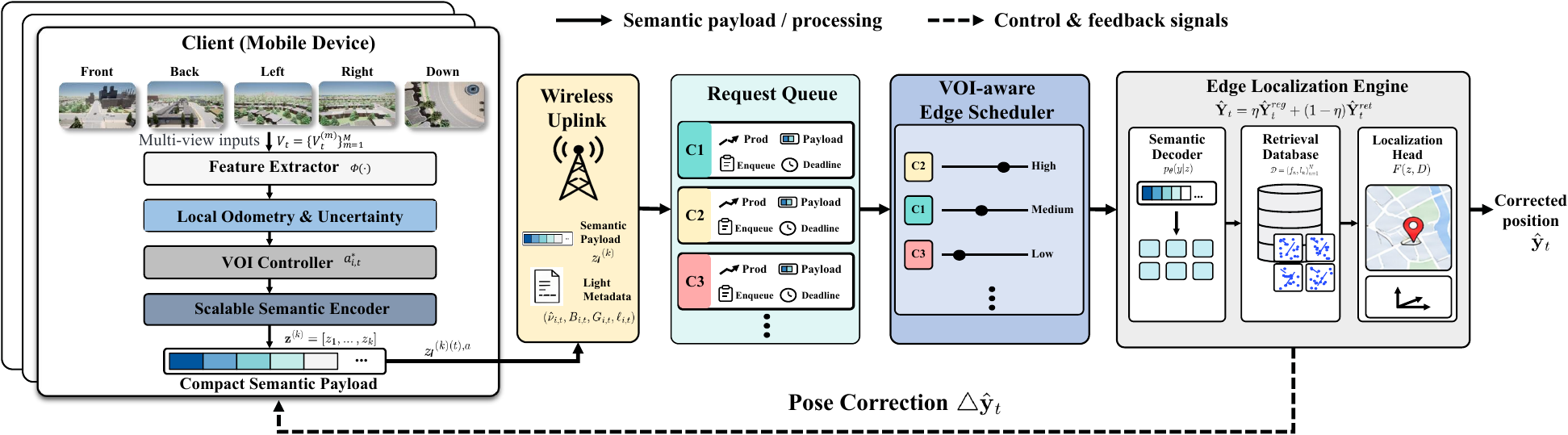}
  \caption{System architecture of network-adaptive edge-assisted multi-view localization. The client maintains a local fast loop, requests edge localization only when the predicted value is high, and receives correction feedback after shared wireless and edge-scheduling stages.}
  \label{fig:architecture}
  \vspace{-3mm}
\end{figure*}

We consider an edge-assisted multi-view localization system in a GPS-denied environment, where satellite positioning is unavailable, intermittent, or spoofed, as shown in Fig.~\ref{fig:architecture}. Each client carries \(M\) onboard cameras, one for each view. A UAV uses $M=5$ front, back, left, right, and downward views, and a UGV uses $M=4$ horizontal views. At slot \(t\), the client extracts the feature $\mathbf X_t^{(m)}=\Phi(V_t^{(m)})$ of the image \(V_t^{(m)}\) from view \(m\) with the backbone \(\Phi\) in Section~\ref{sec:feature_extraction}, and the features of $\mathbf V_t=\{V_t^{(m)}\}_{m=1}^{M}$ form the multi-view feature $\mathbf X_t$. The edge server stores a geo-tagged database $\mathcal D=\{(f_n,l_n)\}_{n=1}^{N_{\mathcal D}}$, where \(f_n\) is a reference descriptor, \(l_n\) is its location, and \(N_{\mathcal D}\) is the number of entries.

\begin{figure*}[t]
   \centering
   \includegraphics[width=0.98\textwidth]{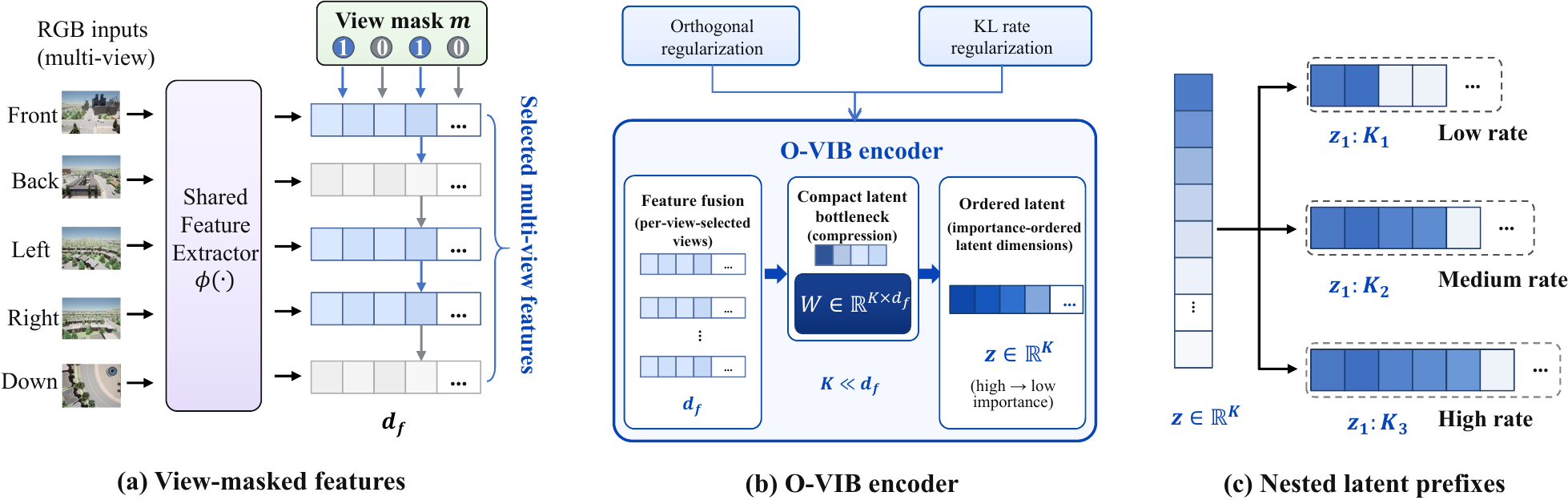}
   \caption{Scalable O-VIB encoder for view-rate adaptive semantic localization. View masks select the available camera subset, while nested latent prefixes provide low-, medium-, and high-rate semantic payloads for one deployed encoder.}
   \label{fig:ovib_encoder}
   \vspace{-3mm}
\end{figure*}

As shown in Fig.~\ref{fig:ovib_encoder}, our design encodes \(\mathbf X_t\) into the latent representation $\mathbf Z_t=\mathcal E(\mathbf X_t;\Theta_E)$ at the client and transmits \(\mathbf Z_t\) to the edge server. The edge localizer then estimates the position as $\posest{t}=\mathcal F(\mathbf Z_t;\Theta_F,\mathcal D)$. It combines decoder regression with nearest-neighbor retrieval by cosine similarity as $\posest{t}=\eta\posest{t}^{\mathrm{reg}}+(1-\eta)\posest{t}^{\mathrm{ret}}$, where the mixing weight $\eta\in[0,1]$ is selected on the validation set. We write $\Theta_E=\phi$ and $\Theta_F=\theta$ in the rest of the paper. Because exhaustive retrieval over a large database is costly at the edge, we summarize the database with prototypes. A scene prototype is the mean of the descriptors \(f_n\) from one scene, and a tile prototype is the mean of the descriptors whose locations \(l_n\) fall in one spatial tile of that scene. The retrieval branch first matches scene prototypes and then tile prototypes, and it searches only the descriptors in the selected tiles. This coarse-to-fine search reduces the cost of fine retrieval from $O(N_{\mathcal D}d_{\mathrm r})$ to $O(N_{\mathrm{tile}}d_{\mathrm r})$, where $d_{\mathrm r}$ is the descriptor dimension and $N_{\mathrm{tile}}$ is the number of descriptors in the selected tiles.

\subsection{Communication Model}

We model the uplink between a client and the edge server as a block-fading channel, whose gain stays constant during the transmission of one request. The achievable rate is $R=B\log_2(1+Pg/(N_0B))$, where \(B\) is the channel bandwidth, \(P\) is the transmit power of the client, and \(N_0\) is the noise power spectral density. The channel gain $g=g_0(\varrho_0/\varrho)^\kappa10^{\xi/10}|h|^2$ captures large-scale and small-scale effects, where \(g_0\) is the path gain at the reference distance \(\varrho_0\), \(\varrho\) is the client-edge distance, \(\kappa\) is the path-loss exponent, \(\xi\) is the shadowing in dB, and \(h\) is the small-scale fading coefficient. A payload of $|\mathbf Z_t|$ bits therefore incurs the transmission delay $\tau=|\mathbf Z_t|/R$. In our evaluation, \(R\) follows the controlled throughput profiles described in Section~\ref{sec:eval}.

\subsection{Problem Formulation}

Our objective is to minimize the localization error subject to a per-request payload budget:
\begin{equation}
\min_{\Theta}\;\mathbb{E}\!\bigl[\|\posest{t}-\posgt{t}\|_{2}^{2}\bigr]
\quad\text{s.t.}\quad \mathcal{C}(\mathbf Z_t)\le C_{\max},
\label{eq:problem}
\end{equation}
where $\posest{t}$ and $\posgt{t}$ are the predicted and true positions, \(\Theta=\{\Theta_E,\Theta_F\}\) collects the encoder and localizer parameters, \(\mathcal C(\mathbf Z_t)=|\mathbf Z_t|\) is the payload size of \(\mathbf Z_t\) in bits, and \(C_{\max}\) is the payload budget of one request. Lowercase $\mathbf x,\mathbf z,\mathbf y$ denote realizations of the flattened feature \(\mathbf X_t\), the latent representation \(\mathbf Z_t\), and the position \(\posgt{t}\), respectively. We address \eqref{eq:problem} in three parts. Section~\ref{sec:method} trains one encoder that serves a family of payload budgets, Section~\ref{sec:voi_control} decides when to send a request and with which budget, and Section~\ref{sec:voi_scheduler} schedules the requests of multiple clients at the edge server.


\section{Scalable Multi-View O-VIB Encoder}\label{sec:method}

\subsection{Task-Oriented Feature Extraction}\label{sec:feature_extraction}

We use a CLIP-based vision backbone for multi-view feature extraction.
Each image $V_t^{(m)}$ is processed by a shared CLIP ViT-B/32 encoder~\cite{radford2021clip}. After the preprocessing $\pi(\cdot)$, the embedding of view $m$ is $\mathbf{X}_t^{(m)}=\Phi(V_t^{(m)})=f_{\text{CLIP}}(\pi(V_t^{(m)});\theta_\Phi)\in\mathbb R^d$, where $\theta_\Phi$ denotes the pretrained CLIP parameters and $d=512$ is the embedding dimension. We use this pretrained backbone because it provides a reusable visual basis across scenes and does not add a task-specific backbone to the resource-constrained client. We then normalize each embedding as $\tilde{\mathbf X}_t^{(m)}=\mathbf X_t^{(m)}/\|\mathbf X_t^{(m)}\|_2$ and concatenate the normalized embeddings into $\mathbf X_t\in\mathbb R^{M\times d}$.

\subsection{Task-Oriented Feature Compression}\label{sec:compression}

The multi-view feature $\mathbf{X}_t \in \mathbb{R}^{M \times d}$ is flattened into $\mathbf{x}\in\mathbb{R}^{d_f}$ for joint encoding, where $d_f:=Md$. Let \(K\) denote the full latent dimension and \(\mathbf z\in\mathbb R^{K}\) the full encoder output. Since the uplink bandwidth is limited, the client needs a compact and task-relevant representation~\cite{fang2025ton}. The information bottleneck (IB) principle~\cite{tishby1999ib} and its variational form~\cite{alemi2017deep} provide the theoretical framework for learning such a representation. The IB principle seeks a stochastic encoder $q_{\phi}(\mathbf{z}|\mathbf{x})$ that minimizes $I(\mathbf{x};\mathbf{z})$ for compactness while maximizing $I(\mathbf{z};\mathbf{y})$ for localization relevance. Its objective is
\begin{equation} \label{OP:IB}
  \min_{\phi}\; 
  \underbrace{\beta\,I(\mathbf x;\mathbf z)}_{\text{Rate}}-
  \underbrace{I(\mathbf z;\mathbf y)}_{\text{Accuracy}},
\end{equation}
where the non-negative hyperparameter $\beta$ controls the trade-off between compression and localization accuracy. This trade-off is important in our design, because the same latent representation serves both the edge localizer and the variable-rate wireless link.
\subsubsection{Variational Rate Term}

The rate term $I(\mathbf x;\mathbf z)$ is controlled through a prior $p(\mathbf z)$ on the latent representation. The encoder is modeled as a diagonal Gaussian distribution $q_{\phi}(\mathbf z\mid\mathbf x)=\mathcal N\!\bigl(\boldsymbol\mu_{\phi}(\mathbf x),\operatorname{diag}\boldsymbol\sigma^{2}_{\phi}(\mathbf x)\bigr)$, and the latent representation is sampled as $\mathbf z=\boldsymbol\mu+\boldsymbol\sigma\odot\boldsymbol\epsilon$ with $\boldsymbol\epsilon\sim\mathcal N(\mathbf 0,\mathbf I)$ through the reparameterization trick~\cite{kingma2014vae}. With the standard Gaussian prior $p(\mathbf z)=\mathcal N(\mathbf 0,\mathbf I)$, the rate term becomes
\begin{equation}
\begin{aligned}
\mathcal R(\phi)
&:=\mathbb E_{\mathbf x}\bigl[\mathrm{KL}\bigl(q_{\phi}(\mathbf z\mid\mathbf x)\,\Vert\,p(\mathbf z)\bigr)\bigr]\\
&=\tfrac12\,\mathbb E_{\mathbf x}\sum_{i=1}^{K}\bigl(\mu_i^{2}+\sigma_i^{2}-\log\sigma_i^{2}-1\bigr),
\end{aligned}
\label{eq:rate_term}
\end{equation}
where $\mathrm{KL}(\cdot\Vert\cdot)$ denotes the Kullback-Leibler divergence. This proper prior gives a tractable rate surrogate shared by every latent prefix in Section~\ref{subsubsec:scalable}, and Lemma~\ref{prop:chain_rule} shows that this surrogate upper-bounds $I(\mathbf x;\mathbf z)$. For the fixed-rate model of our conference version~\cite{fang2025ovib}, we instead use a log-uniform prior $p(z_i)\propto|z_i|^{-1}$, which yields an automatic relevance determination (ARD) penalty that prunes uninformative coordinates~\cite{molchanov2017variational}. Its KL divergence admits the closed-form approximation
\begin{equation}
\begin{aligned}
  \mathcal D_{\textsc{ard}}(\mathbf x)
  &:= \sum_{i=1}^{K}\!
  \Bigl[
      \tfrac{1}{2}\log\!\bigl(1+\alpha_i^{-1}\bigr)
      -c_{1}\,\mathrm{sigm}\!\bigl(c_{2}+c_{3}\log\alpha_{i}\bigr)
      +c_{1}
  \Bigr] \\
  &\approx \mathrm{KL}\!\left(
  q_{\phi}(\mathbf z\mid\mathbf x)\,\middle\|\,p_{\mathrm{LU}}(\mathbf z)
  \right),
  \label{eq:ardkl}
\end{aligned}
\end{equation}
where $p_{\mathrm{LU}}(z_i)\propto|z_i|^{-1}$ denotes the log-uniform prior, $\alpha_i:=\sigma_i^{2}/\mu_i^{2}$ is the noise-to-signal ratio of coordinate $i$, $\mathrm{sigm}(u)=1/(1+e^{-u})$ is the sigmoid function, and $(c_{1},c_{2},c_{3})=(0.63576,\,1.87320,\,1.48695)$ are the constants fitted in~\cite{molchanov2017variational}. The additive constant $c_1$ makes each summand vanish as $\alpha_i\to\infty$, which corresponds to the pruning regime. This approximation deviates from the exact KL divergence by less than $0.009$ over the full range of $\log\alpha_i$~\cite{molchanov2017variational}. We write $\mathcal R_{\textsc{ard}}(\phi):=\mathbb E_{\mathbf x}[\mathcal D_{\textsc{ard}}(\mathbf x)]$ for this variant. Both expectations over $\mathbf x$ are estimated by minibatch averaging.

\begin{lemma}
\label{prop:chain_rule}
Let $q_{\phi}(\mathbf z\mid\mathbf x)$ be any encoder, let $p(\mathbf z)$ be any proper prior density, and let $q_{\phi}(\mathbf z)=\int q_{\phi}(\mathbf z\mid\mathbf x)p(\mathbf x)\,\mathrm{d}\mathbf x$ be the induced marginal. The mutual information under the joint density $p(\mathbf x)q_{\phi}(\mathbf z\mid\mathbf x)$ satisfies
\begin{align}
 I(\mathbf x;\mathbf z)
 &=\mathbb E_{\mathbf x}\!
    \Bigl[\mathrm{KL}\bigl(q_{\phi}(\mathbf z\mid\mathbf x)\,\Vert\,p(\mathbf z)\bigr)\Bigr]
   -\mathrm{KL}\bigl(q_{\phi}(\mathbf z)\,\Vert\,p(\mathbf z)\bigr)
   \label{eq:chain_rule_exact}\\
 &\le \mathbb E_{\mathbf x}\!
    \Bigl[\mathrm{KL}\bigl(q_{\phi}(\mathbf z\mid\mathbf x)\,\Vert\,p(\mathbf z)\bigr)\Bigr].
   \label{eq:rate_upper}
\end{align}
\end{lemma}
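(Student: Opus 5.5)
The approach is the standard decomposition of mutual information through a reference density. Write $I(\mathbf x;\mathbf z)=\mathbb E_{p(\mathbf x)q_\phi(\mathbf z\mid\mathbf x)}\bigl[\log\frac{q_\phi(\mathbf z\mid\mathbf x)}{q_\phi(\mathbf z)}\bigr]$. Inside the logarithm, multiply and divide by the proper prior $p(\mathbf z)$:
\begin{equation*}
\log\frac{q_\phi(\mathbf z\mid\mathbf x)}{q_\phi(\mathbf z)}=\log\frac{q_\phi(\mathbf z\mid\mathbf x)}{p(\mathbf z)}-\log\frac{q_\phi(\mathbf z)}{p(\mathbf z)}.
\end{equation*}
Then take the joint expectation of each term separately.

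For the first term, apply the tower rule by conditioning on $\mathbf x$. This gives $\mathbb E_{\mathbf x}[\mathrm{KL}(q_\phi(\mathbf z\mid\mathbf x)\Vert p(\mathbf z))]$.

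The second term depends only on $\mathbf z$. Its expectation under the joint density therefore reduces to an expectation under the marginal $q_\phi(\mathbf z)=\int q_\phi(\mathbf z\mid\mathbf x)p(\mathbf x)\,\mathrm d\mathbf x$. By Fubini, that expectation is exactly $\mathrm{KL}(q_\phi(\mathbf z)\Vert p(\mathbf z))$. Together the two terms give \eqref{eq:chain_rule_exact}.

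The inequality \eqref{eq:rate_upper} then follows from Gibbs' inequality $\mathrm{KL}(q_\phi(\mathbf z)\Vert p(\mathbf z))\ge 0$. This is valid because both $q_\phi(\mathbf z)$ and $p(\mathbf z)$ are probability densities, which is where properness of the prior is used. Gibbs' inequality itself comes from Jensen's inequality applied to $-\log$.

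The main obstacle is making the splitting of the expectation legitimate rather than a formal manipulation. It fails if one of the two terms is $+\infty$ while the other is also infinite, which gives an undefined $\infty-\infty$. I would handle this by cases.

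\emph{Case 1: $\mathbb E_{\mathbf x}[\mathrm{KL}(q_\phi(\mathbf z\mid\mathbf x)\Vert p(\mathbf z))]=\infty$.} Then \eqref{eq:rate_upper} holds trivially. The identity \eqref{eq:chain_rule_exact} is read in the extended sense whenever the right-hand side is defined.

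\emph{Case 2: this expectation is finite.} Finiteness forces $q_\phi(\cdot\mid\mathbf x)\ll p$ for almost every $\mathbf x$, and hence $q_\phi(\mathbf z)\ll p$. I would then use the convexity of KL in its first argument: $\mathrm{KL}(q_\phi(\mathbf z)\Vert p)\le\mathbb E_{\mathbf x}[\mathrm{KL}(q_\phi(\mathbf z\mid\mathbf x)\Vert p)]<\infty$. With both terms finite, linearity of expectation is justified, provided the negative parts of the log-ratios are integrable. The bound $\mathbb E[(\log u)^-]\le 1/e$ under the base measure, i.e. $u\log u\ge -1/e$, gives this.

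In the Gaussian setting of \eqref{eq:rate_term} all quantities are finite. So these technicalities can be stated briefly and do not affect the main computation.
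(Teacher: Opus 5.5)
Your proposal is correct and is essentially the paper's proof: the paper also expands $I(\mathbf x;\mathbf z)=\mathbb E_{\mathbf x,\mathbf z}[\log q_{\phi}(\mathbf z\mid\mathbf x)-\log q_{\phi}(\mathbf z)]$ and adds and subtracts $\log p(\mathbf z)$ to obtain the identity. It then drops the non-negative term $\mathrm{KL}(q_{\phi}(\mathbf z)\Vert p(\mathbf z))$, using properness of the prior at the same point you do. Your case analysis for $\infty-\infty$, with integrability from $u\log u\ge -1/e$, adds rigor the paper omits (it splits the expectation formally), and it changes nothing in the Gaussian setting of the rate term, where every quantity is finite.
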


\begin{proof}
Expanding the mutual information gives $I(\mathbf x;\mathbf z)=\mathbb E_{\mathbf x,\mathbf z}[\log q_{\phi}(\mathbf z\mid\mathbf x)-\log q_{\phi}(\mathbf z)]$. Adding and subtracting $\mathbb E_{\mathbf z}[\log p(\mathbf z)]$ yields \eqref{eq:chain_rule_exact}. The second term in \eqref{eq:chain_rule_exact} is non-negative for a proper density, and dropping it yields \eqref{eq:rate_upper}.
\end{proof}

\begin{lemma}\label{prop:izy_bound}
For any decoder $p_{\theta}(\mathbf{y}|\mathbf{z})$ and the joint density $p(\mathbf z,\mathbf y)=\int p(\mathbf x,\mathbf y)q_{\phi}(\mathbf z\mid\mathbf x)\,\mathrm{d}\mathbf x$ induced by the Markov chain $\mathbf y\to\mathbf x\to\mathbf z$, the mutual information between the latent representation $\mathbf{z}$ and the task variable $\mathbf{y}$ is lower-bounded by
\begin{equation}\label{eq:izy_bound}
  I(\mathbf{z};\mathbf{y})
  \ge
  \mathbb{E}_{\mathbf{z},\mathbf{y}}\!\left[\log p_{\theta}(\mathbf{y}|\mathbf{z})\right]
  + \mathrm{h}(\mathbf{y}),
\end{equation}
where $\mathrm{h}(\mathbf{y}) = -\mathbb{E}_{\mathbf{y}}[\log p(\mathbf{y})]$ is the differential entropy of the continuous position $\mathbf{y}$ and is constant with respect to $(\phi,\theta)$.
\end{lemma}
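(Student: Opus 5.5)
The plan is the standard Barber--Agakov variational argument. I would mirror the proof of Lemma~\ref{prop:chain_rule}, but introduce the auxiliary density on the conditional side rather than the marginal side. First I would write the mutual information under the induced joint $p(\mathbf z,\mathbf y)$ as
\begin{equation*}
I(\mathbf z;\mathbf y)=\mathbb E_{\mathbf z,\mathbf y}\bigl[\log p(\mathbf y\mid\mathbf z)\bigr]-\mathbb E_{\mathbf y}\bigl[\log p(\mathbf y)\bigr].
\end{equation*}
Here $p(\mathbf y\mid\mathbf z)=p(\mathbf z,\mathbf y)/p(\mathbf z)$ is the true posterior implied by the Markov chain $\mathbf y\to\mathbf x\to\mathbf z$, and $p(\mathbf z)=\int q_{\phi}(\mathbf z\mid\mathbf x)p(\mathbf x)\,\mathrm d\mathbf x$ is the encoder marginal. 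The second term is exactly $\mathrm h(\mathbf y)$. It depends only on the data distribution $p(\mathbf x,\mathbf y)$, and neither $\phi$ nor $\theta$ enters it, which gives the claimed constancy.

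Next I would add and subtract $\mathbb E_{\mathbf z,\mathbf y}[\log p_{\theta}(\mathbf y\mid\mathbf z)]$ for the arbitrary decoder. This splits the first term as
\begin{equation*}
\mathbb E_{\mathbf z,\mathbf y}\bigl[\log p(\mathbf y\mid\mathbf z)\bigr]
=\mathbb E_{\mathbf z,\mathbf y}\bigl[\log p_{\theta}(\mathbf y\mid\mathbf z)\bigr]
+\mathbb E_{\mathbf z}\Bigl[\mathrm{KL}\bigl(p(\mathbf y\mid\mathbf z)\,\Vert\,p_{\theta}(\mathbf y\mid\mathbf z)\bigr)\Bigr].
\end{equation*}
The identity follows by conditioning the joint expectation on $\mathbf z$ and recognizing the inner integral over $\mathbf y$ as a KL divergence. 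By Gibbs' inequality the KL divergence is non-negative for every $\mathbf z$, so its average over $p(\mathbf z)$ is non-negative. Dropping this term gives \eqref{eq:izy_bound}. Equality holds exactly when $p_{\theta}(\mathbf y\mid\mathbf z)$ matches the true posterior for $p(\mathbf z)$-almost every $\mathbf z$.

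The only delicate step is making sure each expectation is finite, so that the add-and-subtract step is legitimate. Since $\mathbf y$ is a continuous position, $\mathrm h(\mathbf y)$ may be negative, but it is finite under the mild assumption that $p(\mathbf y)$ has a density with finite entropy. I would state this assumption explicitly. If $\mathbb E[\log p_{\theta}(\mathbf y\mid\mathbf z)]=-\infty$, the bound holds trivially. Otherwise every term is finite, the rearrangement is valid, and nothing else beyond Gibbs' inequality is needed.
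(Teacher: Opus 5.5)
Your proposal is correct and follows essentially the same route as the paper. Both write $I(\mathbf z;\mathbf y)=\mathrm h(\mathbf y)-\mathrm h(\mathbf y\mid\mathbf z)$, use $\mathbb E_{\mathbf z}\bigl[\mathrm{KL}\bigl(p(\mathbf y\mid\mathbf z)\,\Vert\,p_{\theta}(\mathbf y\mid\mathbf z)\bigr)\bigr]\ge 0$ to replace the true posterior log-likelihood by the decoder's, and drop the gap; your explicit add-and-subtract identity, equality condition, and finiteness caveats are harmless refinements the paper omits.
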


\begin{proof}
By definition,
\begin{equation}
I(\mathbf{z};\mathbf{y}) = \mathrm{h}(\mathbf{y}) - \mathrm{h}(\mathbf{y}|\mathbf{z}).
\end{equation}
Moreover,
\begin{equation}
\mathbb{E}_{q_\phi(\mathbf z)}\!\Bigl[\mathrm{KL}\!\bigl(p(\mathbf y|\mathbf z)\,\|\,p_\theta(\mathbf y|\mathbf z)\bigr)\Bigr] \ge 0,
\end{equation}
which implies
\begin{equation}
\mathbb{E}_{\mathbf{z},\mathbf{y}}\!\left[\log p(\mathbf{y}|\mathbf{z})\right]
\ge
\mathbb{E}_{\mathbf{z},\mathbf{y}}\!\left[\log p_\theta(\mathbf{y}|\mathbf{z})\right].
\end{equation}
Since
\begin{equation}
\mathbb{E}_{\mathbf{z},\mathbf{y}}\!\left[\log p(\mathbf{y}|\mathbf{z})\right]
= -\mathrm{h}(\mathbf{y}|\mathbf{z}),
\end{equation}
substitution into $I(\mathbf z;\mathbf y)=\mathrm h(\mathbf y)-\mathrm h(\mathbf y|\mathbf z)$ yields \eqref{eq:izy_bound}, which is the variational lower bound of~\cite{barber2003im}.
\end{proof}

\begin{theorem}\label{thm:ib_ard}
For the proper Gaussian prior $p(\mathbf z)=\mathcal N(\mathbf 0,\mathbf I)$ used by the scalable encoder and any decoder $p_{\theta}(\mathbf y\mid\mathbf z)$, the IB objective $\mathcal L_{\mathrm{IB}}:=\beta I(\mathbf x;\mathbf z)-I(\mathbf z;\mathbf y)$ in \eqref{OP:IB} satisfies
\begin{equation}
\mathcal L_{\mathrm{IB}}
\le
\beta\,\mathcal R(\phi)
-\mathbb E_{\mathbf z,\mathbf y}\bigl[\log p_{\theta}(\mathbf y\mid\mathbf z)\bigr]
-\mathrm h(\mathbf y).
\label{eq:vib_upper}
\end{equation}
Since $-\mathrm h(\mathbf y)$ is constant with respect to $(\phi,\theta)$, minimizing the right-hand side of \eqref{eq:vib_upper} is equivalent to solving
\begin{equation}
  \min_{\phi,\theta}\;
  \beta\,\mathcal R(\phi)
  -\mathbb E_{\mathbf z,\mathbf y}
      \bigl[\log p_{\theta}(\mathbf y\mid\mathbf z)\bigr].
  \label{eq:vib_obj}
\end{equation}
\end{theorem}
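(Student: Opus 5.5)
The plan is to bound the two terms of $\mathcal L_{\mathrm{IB}}=\beta I(\mathbf x;\mathbf z)-I(\mathbf z;\mathbf y)$ separately, using the two lemmas already established, and then add the bounds.

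\emph{Rate term.} We invoke Lemma~\ref{prop:chain_rule} with the proper prior $p(\mathbf z)=\mathcal N(\mathbf 0,\mathbf I)$. Inequality \eqref{eq:rate_upper} gives $I(\mathbf x;\mathbf z)\le\mathbb E_{\mathbf x}[\mathrm{KL}(q_\phi(\mathbf z\mid\mathbf x)\Vert p(\mathbf z))]$. By the definition in \eqref{eq:rate_term}, the right-hand side is exactly $\mathcal R(\phi)$. Since $\beta\ge 0$, multiplying through preserves the inequality, so
\[
\beta I(\mathbf x;\mathbf z)\le\beta\mathcal R(\phi).
\]
The closed form in \eqref{eq:rate_term} is the standard Gaussian KL and needs no further argument.

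\emph{Relevance term.} Lemma~\ref{prop:izy_bound} holds for any decoder and uses the joint induced by $\mathbf y\to\mathbf x\to\mathbf z$. Negating it gives
\[
-I(\mathbf z;\mathbf y)\le-\mathbb E_{\mathbf z,\mathbf y}[\log p_\theta(\mathbf y\mid\mathbf z)]-\mathrm h(\mathbf y).
\]
Adding this to the rate bound yields \eqref{eq:vib_upper}.

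\emph{Equivalence with \eqref{eq:vib_obj}.} The term $\mathrm h(\mathbf y)$ depends only on the data distribution $p(\mathbf y)$, not on $\phi$ or $\theta$. Dropping it therefore shifts the objective by a constant and leaves the set of minimizers unchanged, which gives \eqref{eq:vib_obj}.

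The main obstacle is checking that the two lemmas are applied under the \emph{same} joint distribution $p(\mathbf x,\mathbf y)q_\phi(\mathbf z\mid\mathbf x)$. Lemma~\ref{prop:chain_rule} is stated on the $(\mathbf x,\mathbf z)$ marginal and Lemma~\ref{prop:izy_bound} on the $(\mathbf z,\mathbf y)$ marginal. Both are marginals of this single joint, so the expectation in the bound on $I(\mathbf z;\mathbf y)$ is over the $\mathbf z$ actually produced by the encoder, and the two bounds combine legitimately. A secondary technical point is finiteness: $\mathrm h(\mathbf y)$ and $\mathrm h(\mathbf y\mid\mathbf z)$ must be finite so that no $\infty-\infty$ arises when the bounds are added. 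I would state this as a mild regularity assumption, namely that positions have a density with finite entropy.
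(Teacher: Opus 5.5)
Your proposal is correct and follows the paper's proof: you apply Lemma~\ref{prop:chain_rule} with the Gaussian prior and $\beta\ge 0$ to get $\beta I(\mathbf x;\mathbf z)\le\beta\mathcal R(\phi)$, apply Lemma~\ref{prop:izy_bound} to the relevance term, add the two inequalities, and drop the $(\phi,\theta)$-independent $\mathrm h(\mathbf y)$. Your remarks on using the single joint $p(\mathbf x,\mathbf y)q_\phi(\mathbf z\mid\mathbf x)$ and on assuming finite entropies are sound and make explicit conditions that the paper leaves unstated.
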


\begin{proof}
Lemma~\ref{prop:chain_rule} gives $\beta I(\mathbf x;\mathbf z)\le\beta\mathcal R(\phi)$ for $\beta\ge 0$, and Lemma~\ref{prop:izy_bound} gives $-I(\mathbf z;\mathbf y)\le-\mathbb E_{\mathbf z,\mathbf y}[\log p_{\theta}(\mathbf y\mid\mathbf z)]-\mathrm h(\mathbf y)$. Adding the two inequalities yields \eqref{eq:vib_upper}, and removing the constant $-\mathrm h(\mathbf y)$ yields \eqref{eq:vib_obj}.
\end{proof}

In practice, we optimize \eqref{eq:vib_obj} with the Gaussian prior for the scalable model. The fixed-rate model uses $\mathcal R_{\textsc{ard}}$ in place of $\mathcal R$. Since the log-uniform prior is improper, Theorem~\ref{thm:ib_ard} applies only to the Gaussian-prior formulation. The second term in \eqref{eq:vib_obj} is induced by the variational decoder $p_{\theta}(\mathbf{y}\mid\mathbf{z})$. With a Gaussian position decoder of fixed isotropic variance, the negative log-likelihood equals a squared-error regression loss up to additive and multiplicative constants, and the multiplicative constant is absorbed into the localization weight $\alpha$ in \eqref{eq:composite_loss}. We additionally train a reconstruction decoder that maps the latent representation back to the feature space. Its output is used by the retrieval branch at the edge, and the reconstruction loss also regularizes the training.

\subsubsection{Orthogonality Under the IB Objective}\label{subsubsec:orthogonal_ib}

Let $\mathbf h_\phi(\mathbf x)\in\mathbb R^{d_h}$ be the final hidden feature of the encoder, where \(d_h=256\), and let $\mathbf W\in\mathbb R^{K\times d_h}$ be the weight matrix of the linear projection that maps \(\mathbf h_\phi(\mathbf x)\) to the posterior mean \(\boldsymbol\mu_\phi(\mathbf x)\). The rate penalty pulls posterior coordinates toward the prior, while the nested-prefix training in Section~\ref{subsubsec:scalable} favors early coordinates. Several rows of \(\mathbf W\) may then become aligned, causing redundant latent directions. We therefore impose approximate row-orthogonality on \(\mathbf W\). Proposition~\ref{proposition:vib_orthogonal} shows that if \(\mathbf{W}\mathbf{W}^\top\) is close to the identity, all singular values remain bounded away from zero and the mean projection avoids rank collapse across its \(K\) directions.

\begin{proposition}\label{proposition:vib_orthogonal}
Let $\mathbf{W}\in\mathbb{R}^{K\times d_h}$ denote the weight matrix of the posterior-mean projection. Assume the approximate orthogonality condition
\begin{equation}
\mathbf{W}\mathbf{W}^\top = \mathbf{I}_{K} + \boldsymbol{\Delta},
\end{equation}
where $\boldsymbol{\Delta}$ is a symmetric perturbation matrix satisfying $\|\boldsymbol{\Delta}\|_2 \le \varepsilon$ for some $0<\varepsilon<1$. Then all singular values $\sigma_i(\mathbf W)$ of $\mathbf{W}$ are bounded as
\begin{equation}
\sqrt{1-\varepsilon}\;\le\;\sigma_i(\mathbf{W})\;\le\;\sqrt{1+\varepsilon},
\qquad i=1,\ldots,K.
\end{equation}
In particular,
\begin{equation}
\sigma_{\min}(\mathbf{W}) \ge \sqrt{1-\varepsilon},
\end{equation}
so $\mathbf{W}$ has full row rank and the posterior-mean projection avoids rank collapse.
\end{proposition}

\begin{proof}
The singular values of $\mathbf{W}$ are the square roots of the eigenvalues of the \(K\times K\) matrix $\mathbf{W}\mathbf{W}^\top$. Since
\begin{equation}
\mathbf{W}\mathbf{W}^\top = \mathbf{I}_{K} + \boldsymbol{\Delta}
\quad \text{with} \quad \|\boldsymbol{\Delta}\|_2 \le \varepsilon,
\end{equation}
every eigenvalue of $\mathbf{W}\mathbf{W}^\top$ lies in the interval $[1-\varepsilon,\,1+\varepsilon]$. Therefore, for each singular value $\sigma_i(\mathbf{W})$,
\begin{equation}
1-\varepsilon \le \sigma_i^2(\mathbf{W}) \le 1+\varepsilon.
\end{equation}
Taking square roots gives the claimed bound.
\end{proof}

Since $\|\boldsymbol\Delta\|_2\le\|\boldsymbol\Delta\|_F$, the normalized Frobenius penalty \(K^{-2}\|\mathbf W\mathbf W^{\top}-\mathbf I_{K}\|_F^2\) in \eqref{eq:composite_loss} controls the perturbation in the proposition. Section~\ref{sec:scalable_eval} evaluates the empirical effect of the orthogonality penalty.

\subsubsection{Scalable View-Rate Encoding}\label{subsubsec:scalable}

Rather than compressing each camera stream in isolation, we concatenate the \(M\) view-wise embeddings and pass them through one O-VIB encoder so that the joint code captures complementary information across views. To support rate adaptation without a separate encoder for each bitrate, let \(\mathcal K\subseteq\{1,\ldots,K\}\), with \(K\in\mathcal K\), denote the supported prefix lengths. For \(k\in\mathcal K\), the client transmits \(\mathbf z^{(k)}=[z_1,\ldots,z_k]\) and masks the remaining coordinates. A prefix contains \(4k\) float32 bytes and a 20-byte representation header.

The prefix ordering is learned during training through nested rate dropout, following the ordered-representation principle of nested dropout~\cite{rippel2014nested} and Matryoshka representation learning~\cite{kusupati2022mrl}. For each minibatch, we sample a prefix length \(k\in\mathcal K\) and apply the binary prefix mask
\begin{equation}
\mathbf m^{(k)}=[\underbrace{1,\ldots,1}_{k},\underbrace{0,\ldots,0}_{K-k}],
\qquad
\tilde{\mathbf z}^{(k)}=\mathbf m^{(k)}\odot \mathbf z .
\label{eq:prefix_mask}
\end{equation}
The same decoder reconstructs features and supports position inference from \(\tilde{\mathbf z}^{(k)}\) for all \(k\). If \(i<j\), coordinate \(i\) is active whenever coordinate \(j\) is active, so earlier coordinates participate in at least as many prefix losses and become importance ordered.

We handle view adaptation with a view mask. Let \(\mathcal S_i\) be the set of admissible view modes on the platform of client \(i\). For the UAV, \(\mathcal S_{\mathrm{UAV}}=\{\mathrm F,\mathrm{FS},\mathrm{H4},\mathrm{A5}\}\), where F uses the front view, FS adds the left and right views, H4 uses the four horizontal views, and A5 uses all five views. For the four-view UGV, \(\mathcal S_{\mathrm{UGV}}=\{\mathrm F,\mathrm{FS},\mathrm{H4}\}\), where H4 uses all available views. For mode \(S\in\mathcal S_i\), inactive embeddings are set to zero before joint encoding, and the same view mask is applied to the decoded retrieval descriptor. Zeroing keeps one fixed encoder input for all view subsets. Training covers the UAV modes and prefixes, while the UGV evaluation uses its feasible modes. The application-layer request size is
\[
b_i(S,k)=4k+20+4|S|\ \text{bytes},
\]
where \(|S|\) is the number of selected views and each selected view adds a 4-byte identifier. Thus, one encoder supports \(\mathcal A_i=\mathcal S_i\times\mathcal K\). The online controller uses the compact and rich prefixes \(\mathcal K_{\mathrm{ctrl}}=\{16,32\}\) and evaluates
\(\mathcal A_{i,\mathrm{ctrl}}=\mathcal S_i\times\mathcal K_{\mathrm{ctrl}}\).

\subsubsection{Overall Training Objective}

Let \(\mathbf m_S\in\{0,1\}^{M}\) be the view mask for mode \(S\), broadcast over the \(d\) coordinates of each view, and let \(\mathbf x_S\in\mathbb R^{d_f}\) be the resulting flattened feature. The encoder maps \(\mathbf x_S\) to \(\boldsymbol\mu_{\phi}(\mathbf x_S)\) and \(\boldsymbol\sigma_{\phi}(\mathbf x_S)\). Reparameterization produces \(\mathbf z\), and \eqref{eq:prefix_mask} produces \(\tilde{\mathbf z}^{(k)}\). The reconstruction decoder outputs \(\hat{\mathbf x}_{S,k}\in\mathbb R^{d_f}\), and the edge localizer outputs \(\hat{\mathbf y}_{S,k}\in\mathbb R^3\) from the reconstructed descriptor. By combining the reconstruction loss, the localization loss, the rate term, and the orthogonality penalty, with the view mode and the prefix length sampled during training, we obtain the overall training objective as
\begin{equation}
\begin{aligned}
\mathcal L(\phi,\theta)
=&\;
\mathbb E_{\mathbf x,S,k}
\!\left[
\left\|\mathbf x_S-\hat{\mathbf x}_{S,k}\right\|_2^2
\right]
+\alpha\,
\mathbb E_{\mathbf x,\mathbf y,S,k}
\!\left[
\left\|\mathbf y-\hat{\mathbf y}_{S,k}\right\|_2^2
\right]
\\[-1pt]
&+\beta\,
\mathbb E_{\mathbf x,S}
\!\left[
   \mathrm{KL}
\!\left(
q_\phi(\mathbf z\mid\mathbf x_S)
\,\Vert\,p(\mathbf z)
\right)
\right]
\\[-1pt]
&+\frac{\gamma}{K^2}
\left\|
\mathbf W\mathbf W^\top-\mathbf I_K
\right\|_F^2 ,
\end{aligned}
\label{eq:composite_loss}
\end{equation}
where \(S\in\mathcal S_{\mathrm{UAV}}\) is the sampled view mode, \(k\in\mathcal K\) is the sampled prefix length, and \(\theta\) denotes the reconstruction decoder and edge-localizer parameters. The coefficients \(\alpha,\beta,\gamma>0\) balance the four terms. The reconstruction term preserves appearance cues needed by retrieval, while the localization term remains the task objective. Sampling \((S,k)\) implements nested view-rate training. The rate term penalizes retained information, the prefix length determines the transmitted coordinates, and orthogonality constrains the rows of \(\mathbf W\).

\section{VOI-Guided Request and Mode Selection}
\label{sec:voi_control}

The scalable encoder provides multiple view-rate modes, and the client decides whether an edge correction is worth requesting in the current slot. The client does not observe ground-truth localization error online, so it estimates the task-level VOI before transmission. Gross VOI is the predicted reduction in localization risk from an edge correction, and net VOI further subtracts the communication and service costs.

\begin{figure*}[t]
\centering
\includegraphics[width=0.98\textwidth]{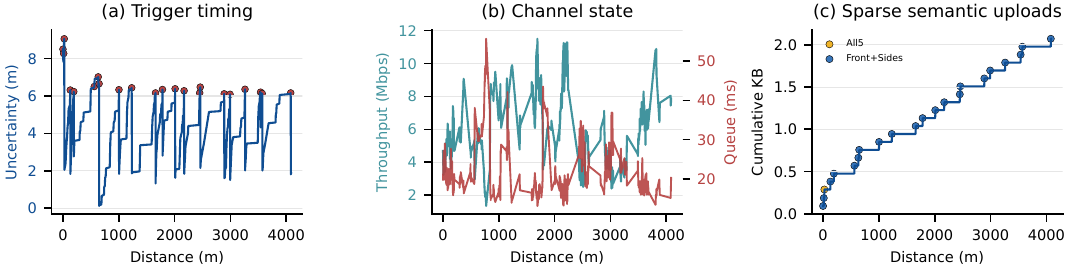}
\caption{VOI-Control request timing on a held-out route. The client delays edge requests while local risk is low, triggers sparse semantic uploads around uncertainty growth, and adapts decisions under time-varying throughput and queue state.}
\label{fig:voi_timing}
\vspace{-2mm}
\end{figure*}

Fig.~\ref{fig:voi_timing} illustrates the request timing of our online controller, denoted VOI-Control, on the held-out route of Section~\ref{sec:eval}.

\subsection{Local Belief and Predicted Edge Gain}

For client \(i\), let \(\bar{\mathbf y}_{i,t}\) and \(\mathbf P_{i,t}\) denote the locally propagated position estimate and its error covariance at slot \(t\), obtained from onboard odometry or a lightweight local estimator. The client navigates with \(\bar{\mathbf y}_{i,t}\) between edge corrections, while \(\mathbf P_{i,t}\) grows as odometry drift accumulates. We use
\begin{equation}
\mathcal L^{\mathrm{loc}}_{i,t}=\sqrt{\operatorname{tr}(\mathbf P_{i,t})},
\label{eq:local_risk}
\end{equation}
as a scalar local-risk proxy in meters. The controller uses this uncertainty statistic for action ranking and does not require ground-truth position error online. For each candidate action \(a=(S,k)\in\mathcal A_{i,\mathrm{ctrl}}\), the client predicts the residual risk after edge correction with
\begin{equation}
\widehat{\mathcal L}^{\mathrm{edge}}_{i,t}(a)
=g_{\psi}\!\left(c_{i,t},S,k\right),
\label{eq:edge_risk}
\end{equation}
where \(g_\psi\) is a residual predictor with parameters \(\psi\), and \(c_{i,t}\) is a locally available route-context category, such as corridor or intersection. In the evaluated replay, \(g_\psi\) returns the median edge residual observed on training routes for the same context and action \((S,k)\). The lookup uses no held-out position labels. The median reduces the effect of rare high-residual views and stabilizes action ranking.

\subsection{Net VOI and Online Action}

The predicted gross VOI of action \(a=(S,k)\) is
\begin{equation}
\Delta \widehat{\mathcal L}_{i,t}(a)
=\left[\mathcal L^{\mathrm{loc}}_{i,t}
-\widehat{\mathcal L}^{\mathrm{edge}}_{i,t}(a)\right]^+ .
\label{eq:gross_voi}
\end{equation}
Here, \([u]^+:=\max\{u,0\}\) denotes the positive part, which assigns zero gross VOI when the predicted residual is no smaller than the local-risk proxy.
For client \(i\) and action \(a=(S,k)\), let \(b(a)=b_i(S,k)\) be the application-layer payload in bytes, \(R_{i,t}\) be the current uplink rate in bits per second, \(E(a)\) be the estimated endpoint energy consumption, and \(\widehat D_{i,t}(a)\) be the predicted queueing, service, and return delay excluding uplink transmission. The net VOI is
\begin{equation}
\widehat \nu_{i,t}(a)
=\Delta \widehat{\mathcal L}_{i,t}(a)
-\lambda_b\frac{8b(a)}{R_{i,t}}
-\lambda_e E(a)
-\lambda_d\widehat D_{i,t}(a),
\label{eq:net_voi}
\end{equation}
where \(\lambda_b\), \(\lambda_e\), and \(\lambda_d\) convert transmission time, energy, and delay into the same utility scale as the localization risk. Our trace replay computes the transmission and delay terms from the payload and latency traces. The energy term applies only when an endpoint energy estimate is available. The client chooses
\begin{equation}
a^*_{i,t}=\arg\max_{a\in\mathcal A_{i,\mathrm{ctrl}}}\widehat \nu_{i,t}(a),
\label{eq:action}
\end{equation}
and submits an edge request only if \(\widehat \nu_{i,t}(a^*_{i,t})>0\) and any configured local energy constraint is satisfied. Besides the compressed latent prefix, the request packet carries the selected view-rate mode, predicted VOI, payload size, and deadline metadata, which form the 20-byte header counted in \(b(a)\).

\section{VOI-Weighted Edge Scheduling}
\label{sec:voi_scheduler}

When multiple clients request edge localization simultaneously, the edge server allocates wireless and computing resources according to queue urgency and localization value, as illustrated in Fig.~\ref{fig:edge_scheduler}. Let \(\mathcal R_t\) be the set of clients with at least one pending request at slot \(t\), and let \(Q_{i,t}\) be the pending-request count for client \(i\). The head-of-line request of client \(i\) carries the net VOI \(\widehat\nu_{i,t}\) predicted when it is generated, payload \(b_{i,t}\), and estimated service time \(G_{i,t}\). The edge server chooses \(x_{i,t}\in\{0,1\}\) by solving

\begin{figure}[t]
\centering
\includegraphics[width=0.47\textwidth]{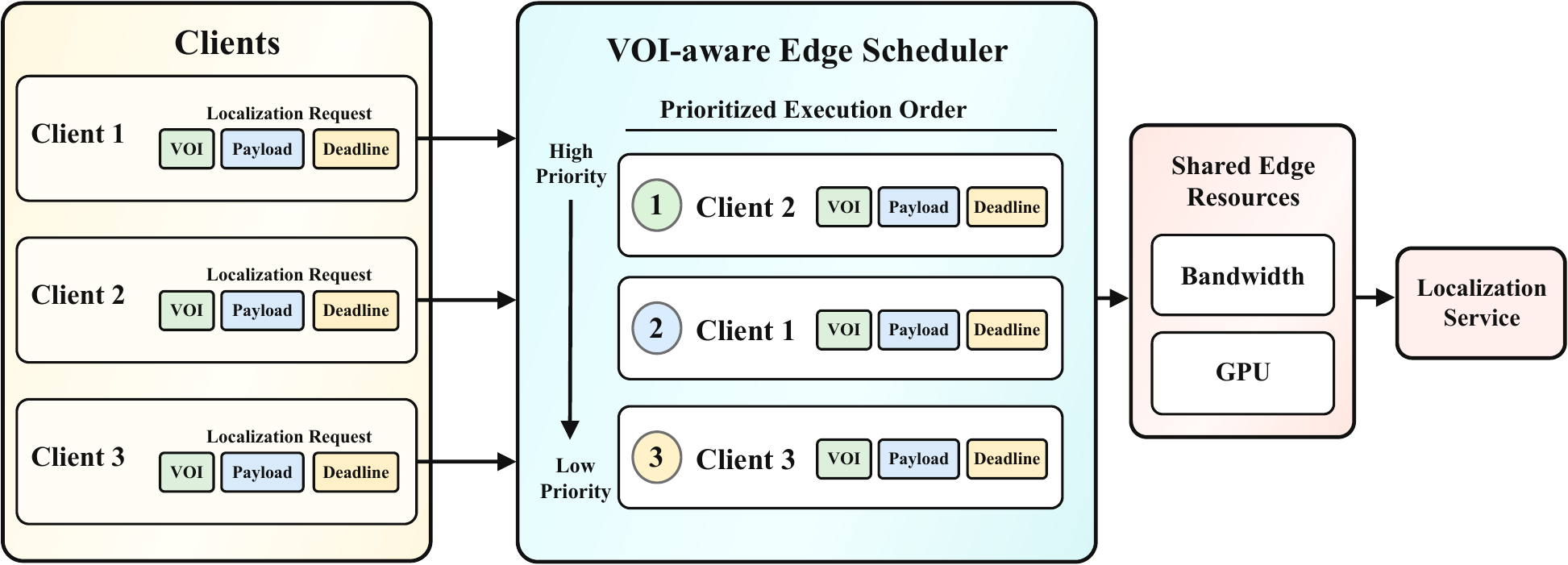}
\caption{VOI-weighted multi-client edge scheduling. Each request carries localization value, payload, estimated service time, and deadline metadata, allowing the edge to prioritize requests that are both urgent and task-relevant.}
\label{fig:edge_scheduler}
\vspace{-2mm}
\end{figure}
\begin{equation}
\begin{aligned}
\max_{\{x_{i,t}\}}\quad
&\sum_{i\in\mathcal R_t}x_{i,t}\bigl(Q_{i,t}+V\widehat\nu_{i,t}\bigr)\\
\text{s.t.}\quad
&\sum_{i\in\mathcal R_t}x_{i,t}b_{i,t}\le b_t^{\max},\\
&\sum_{i\in\mathcal R_t}x_{i,t}G_{i,t}\le G_t^{\max},
\end{aligned}
\label{eq:voi_scheduler}
\end{equation}
where \(b_t^{\max}\) and \(G_t^{\max}\) are the slot-level wireless and edge-computing budgets, and \(V>0\) controls the utility-delay trade-off. In the replay, one work-conserving edge server processes one request at a time. Its per-request service time is the measured mode-dependent compute latency plus 3.1~ms for O-VIB decoding, 5.2~ms for descriptor retrieval, 4.0~ms of runtime overhead, and lognormal jitter with a 2.2-ms median. Reported edge latency includes queueing before service. The baseline scheduler, denoted Base-DPP, serves one pending request at a time using the priority \(Q_{i,t}+V\widehat\nu_{i,t}\), rather than jointly optimizing multiple requests within a slot. Each queue evolves as
\begin{equation}
Q_{i,t+1}=\left[Q_{i,t}-x_{i,t}\right]^+ + A_{i,t},
\label{eq:queue}
\end{equation}
where \(A_{i,t}\) indicates whether a new request from client \(i\) is admitted into the edge queue. The rule in \eqref{eq:voi_scheduler} raises priority with backlog and predicted localization value. It does not explicitly encode waiting age or impending deadline misses, which motivates the shaped rule below.

We therefore design a shaped scheduler. For queued request \(j\), define the mission-weighted scheduling value
\begin{equation}
\widehat v_{j,t}=\max\!\left\{0.1,\omega_j\left(\widehat\nu_{j,t}+1.8r_j^{\mathrm{drift}}+0.22\mathcal L_j^{\mathrm{loc}}-0.006\widehat D_{j,t}\right)\right\},
\label{eq:scheduler_value}
\end{equation}
where \(\omega_j\) is mission urgency, \(r_j^{\mathrm{drift}}\) is the fitted drift rate in meters per second, \(\mathcal L_j^{\mathrm{loc}}\) is local risk, and \(\widehat D_{j,t}\) is predicted latency in milliseconds. The coefficients convert the drift, risk, and latency terms to the meter scale of \(\widehat\nu_{j,t}\), and the floor keeps the value positive. Let \(\operatorname{nrm}_t(u_j)\) denote min-max normalization over the queued requests, with a zero output when the range is zero. The priority score is
\begin{equation}
\begin{aligned}
s_{j,t}={}&
w_v\operatorname{nrm}_t\!\left(\frac{\widehat v_{j,t}}{G_{j,t}}\right)
+w_r\operatorname{nrm}_t(\widehat v_{j,t})
+w_\ell\bigl(1-\operatorname{nrm}_t(\ell_{j,t})\bigr)\\
&+w_a\operatorname{nrm}_t(\zeta_{j,t})
+w_o\operatorname{nrm}_t(o_{j,t})
+w_s\operatorname{nrm}_t(-G_{j,t}).
\end{aligned}
\label{eq:voi_scheduler_impl}
\end{equation}
Here \(G_{j,t}\) is the estimated service time. Let \(\tau_t\) be the clock time at scheduling decision \(t\), \(t_j^{\mathrm{rdy}}\) the ready time, and \(d_j\) the absolute deadline, all expressed in the same time unit as the relative deadline \(D_j^{\mathrm{ddl}}\). The replay uses \(\ell_{j,t}=[d_j-\tau_t-G_{j,t}]^+\), \(\zeta_{j,t}=[\tau_t-t_j^{\mathrm{rdy}}]^+/D_j^{\mathrm{ddl}}\), and \(o_{j,t}=[\tau_t+G_{j,t}-d_j]^+/D_j^{\mathrm{ddl}}\). The laxity \(\ell_{j,t}\) measures the remaining service margin, \(\zeta_{j,t}\) is the normalized waiting age, and \(o_{j,t}\) measures the predicted completion overrun, which can be positive before the deadline. The selected weights are \((w_v,w_r,w_\ell,w_a,w_o,w_s)=(3.2,1.2,1.2,1.4,0.6,0.1)\), fixed before testing. The waiting-age term increases the priority of older requests. This scoring effect is separate from the stability guarantee of \eqref{eq:voi_scheduler}. We evaluate the single-request Base-DPP rule described above and the shaped scheduler, denoted VOI-Lyapunov. The Base-DPP priority follows the drift-plus-penalty construction~\cite{neely2010sno}.

\begin{theorem}\label{thm:voi_scheduler}
Suppose that the arrival indicators \(A_{i,t}\), payloads \(b_{i,t}\), service times \(G_{i,t}\), and predicted values \(\widehat\nu_{i,t}\) are uniformly bounded, are independent and identically distributed across slots, and are independent of the queue state. Suppose that \eqref{eq:voi_scheduler} is solved exactly in each slot and that a stationary randomized policy satisfies the per-slot budgets while serving every client at a rate above its mean arrival rate by a slack \(\delta>0\). Then every queue under \eqref{eq:voi_scheduler} is mean-rate stable. The time-average predicted-VOI utility
\(\liminf_{T\to\infty}T^{-1}\sum_{t=0}^{T-1}\sum_i\mathbb E[x_{i,t}\widehat\nu_{i,t}]\)
is within \(O(1/V)\) of the largest utility achievable by any stabilizing policy, and the \(\limsup\) of the time-average total backlog is \(O(V)\).
\end{theorem}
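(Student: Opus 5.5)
The plan is the standard Lyapunov drift-plus-penalty argument of \cite{neely2010sno}, applied to the queue dynamics \eqref{eq:queue} and the per-slot rule \eqref{eq:voi_scheduler}. Take the quadratic Lyapunov function \(L(\mathbf Q_t)=\tfrac12\sum_i Q_{i,t}^2\) and the conditional drift \(\Delta(\mathbf Q_t)=\mathbb E[L(\mathbf Q_{t+1})-L(\mathbf Q_t)\mid\mathbf Q_t]\).

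\emph{Drift bound.} Squaring \eqref{eq:queue} and using \(([Q-x]^+)^2\le (Q-x)^2\) together with \(x_{i,t},A_{i,t}\in\{0,1\}\) gives
\[
\Delta(\mathbf Q_t)-V\,\mathbb E\Bigl[\sum_i x_{i,t}\widehat\nu_{i,t}\Bigm|\mathbf Q_t\Bigr]\le C+\sum_i Q_{i,t}\,\mathbb E[A_{i,t}]-\mathbb E\Bigl[\sum_i x_{i,t}\bigl(Q_{i,t}+V\widehat\nu_{i,t}\bigr)\Bigm|\mathbf Q_t\Bigr].
\]
Here \(C\) is a finite constant; it exists because all quantities are uniformly bounded. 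The step \([Q-x]^+\le Q\) in the cross term \(2[Q-x]^+A\) is also absorbed into \(C\). For clients with \(Q_{i,t}=0\), choosing \(x_{i,t}=1\) only wastes budget, so I would restrict the decision set to \(\mathcal R_t\). Nothing is lost by this restriction, provided the comparison policy is treated the same way (see below).

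\emph{Comparison with the randomized policy.} Because \eqref{eq:voi_scheduler} is solved exactly, its objective is at least that of any feasible choice, in particular the stationary randomized policy \(x^*\). Since arrivals, payloads, service times, and predicted values are i.i.d.\ and independent of \(\mathbf Q_t\), the conditional expectations under \(x^*\) equal unconditional constants.

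Two such comparisons are needed. First, the \(\delta\)-slack policy gives \(\mathbb E[x^*_{i,t}]\ge\mathbb E[A_{i,t}]+\delta\). Second, an \(\omega\)-only policy attains the optimal utility \(U^{\mathrm{opt}}\) with \(\mathbb E[x^*_{i,t}]\ge\mathbb E[A_{i,t}]\); its existence over bounded i.i.d.\ states is the standard characterization in \cite{neely2010sno}. I would cite this rather than reprove it.

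Inserting these policies into the bound yields two inequalities. The second one, \(\Delta-V\mathbb E[\text{util}]\le C-VU^{\mathrm{opt}}\), is summed over \(t=0,\dots,T-1\), divided by \(T\), and the \(\liminf\) is taken. Since \(L\ge0\), this gives utility \(\ge U^{\mathrm{opt}}-C/V\). The first one gives \(\Delta\le C+V\nu_{\max}-\delta\sum_iQ_{i,t}\). Telescoping it yields \(\limsup T^{-1}\sum_t\sum_i\mathbb E[Q_{i,t}]\le (C+V\nu_{\max})/\delta=O(V)\). Mean-rate stability then follows from \(\mathbb E[L(\mathbf Q_T)]\le T(C+V\nu_{\max})\), so that \(\mathbb E[Q_{i,T}]/T\le\sqrt{2(C+V\nu_{\max})/T}\to0\).

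\emph{Main obstacle.} The delicate step is reconciling the \([\cdot]^+\) in \eqref{eq:queue} with the comparison against \(x^*\). A randomized policy may "serve" an empty queue and so earn utility \(\widehat\nu\) from a phantom request, and \(\widehat\nu\) can be negative. I would handle this by defining utility only for actual service, so that \(x_{i,t}\le \mathbb 1\{Q_{i,t}>0\}\). I would then note that the \(Q_{i,t}\)-weighted term is unaffected when \(Q_{i,t}=0\). The residual utility mismatch is bounded by \(\nu_{\max}\) times the empty-queue probability, and this is absorbed because the optimal policy need not serve empty queues. If that argument gets messy, the fallback is to allow null service with zero reward, which keeps \(C\) finite. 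The resulting constants are then exactly the \(O(1/V)\) and \(O(V)\) terms claimed in the statement.
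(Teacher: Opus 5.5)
Your route is the paper's route: the quadratic Lyapunov function, the drift bound \eqref{eq:drift}, per-slot optimality of \eqref{eq:voi_scheduler}, comparison with stationary randomized policies, and telescoping. You are more explicit than the paper's sketch, which names only the $\delta$-slack policy and leaves the rest to \cite{neely2010sno}. You use the $\delta$-slack policy for stability and the $O(V)$ backlog, and a utility-optimal $\omega$-only policy for the $O(1/V)$ gap. The stability and backlog parts also survive your restriction to $\mathcal R_t$. Restricting $x^*$ to nonempty queues leaves $\sum_i Q_{i,t}x^*_{i,t}$ unchanged and perturbs the penalty term by only $O(V)$, which already fits your $C+V\nu_{\max}$ bound.

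The gap is in the utility bound, at exactly the step you flag, and neither your restriction argument nor your fallback closes it.
\begin{itemize}
\item \emph{Why the absorption fails.} The comparison policy must be queue-blind for $\mathbb E[x^*_{i,t}\widehat\nu_{i,t}\mid\mathbf Q_t]$ to be a constant. Once you restrict it to $\mathcal R_t$, you lose $V\sum_{i:\,Q_{i,t}=0}\mathbb E[x^*_{i,t}\widehat\nu_{i,t}]$ per slot. After dividing by $V$, the residual is up to $\nu_{\max}$ times the time-average expected number of empty queues, and nothing in your argument makes that $O(1/V)$. Saying ``the optimal policy need not serve empty queues'' does not help, because the $\omega$-only benchmark cannot see the queues.
\item \emph{A counterexample for the restricted rule.} Take one client and a budget admitting one service per slot. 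Let $A_t\sim$ Bernoulli$(\lambda)$ and $\widehat\nu_t\in\{0,1\}$ with $\Pr(\widehat\nu_t=1)=p>\lambda$. Since $Q_t+V\widehat\nu_t>0$ whenever $Q_t>0$, the rule serves every nonempty slot for every $V$ and earns $\lambda p$. Serving only when $\widehat\nu_t=1$ is stabilizing and earns $\lambda$, and the relaxed optimum with phantom service is $p$. The gap therefore does not shrink with $V$.
\item \emph{The fallback.} Null service with zero reward is the same restriction and fails identically.
\end{itemize}

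The argument closes only in the relaxed model of \cite{neely2010sno}. There, the maximization ranges over all clients, null service of an empty queue is allowed (the $[\cdot]^+$ in \eqref{eq:queue} absorbs it), and $x_{i,t}\widehat\nu_{i,t}$ is counted for offered service. The per-slot maximizer then truly minimizes the drift-plus-penalty bound, and the $\omega$-only comparison is exact. The $O(1/V)$ claim then holds for this offered utility. The paper's proof passes over the same point when it asserts that \eqref{eq:voi_scheduler}, which sums only over $\mathcal R_t$, minimizes the right-hand side.
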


\begin{proof}
Let \(L(\mathbf Q_t)=\frac{1}{2}\sum_i Q_{i,t}^2\) be the quadratic Lyapunov function and let \(\Delta(\mathbf Q_t)\) denote its one-slot conditional drift. Using the queue update and bounded arrivals and services gives
\begin{equation}
\Delta(\mathbf Q_t)\le C_0+\sum_i Q_{i,t}\,\mathbb E[A_{i,t}-x_{i,t}\mid\mathbf Q_t]
\label{eq:drift}
\end{equation}
for a finite constant \(C_0\) determined by the bounds on arrivals and services. Subtracting \(V\,\mathbb E[\sum_i x_{i,t}\widehat\nu_{i,t}\mid\mathbf Q_t]\) from both sides yields the drift-plus-penalty expression. The scheduler in \eqref{eq:voi_scheduler} minimizes its right-hand side in every slot because the arrival terms do not depend on the decision. Comparing this minimum with the stationary randomized policy, taking expectations, and summing over time give mean-rate stability, an \(O(1/V)\) utility gap, and an \(O(V)\) backlog bound by the standard drift-plus-penalty argument~\cite{neely2010sno}.
\end{proof}

Theorem~\ref{thm:voi_scheduler} concerns the exact slotted optimizer under the stated independence assumptions on both arrivals and request attributes. It does not establish guarantees for the single-request replay, whose head-of-line request attributes persist until service, or for the shaped rule. In the coupled replay, VOI-Control also uses the predicted queueing delay in \eqref{eq:net_voi}, so arrivals depend on the queue state. Section~\ref{sec:sched_eval} therefore evaluates the shaped policy empirically.

\section{Performance Evaluation}\label{sec:eval}

\subsection{Experimental Setup}

\begin{figure}[t]
\centering
\includegraphics[width=\columnwidth]{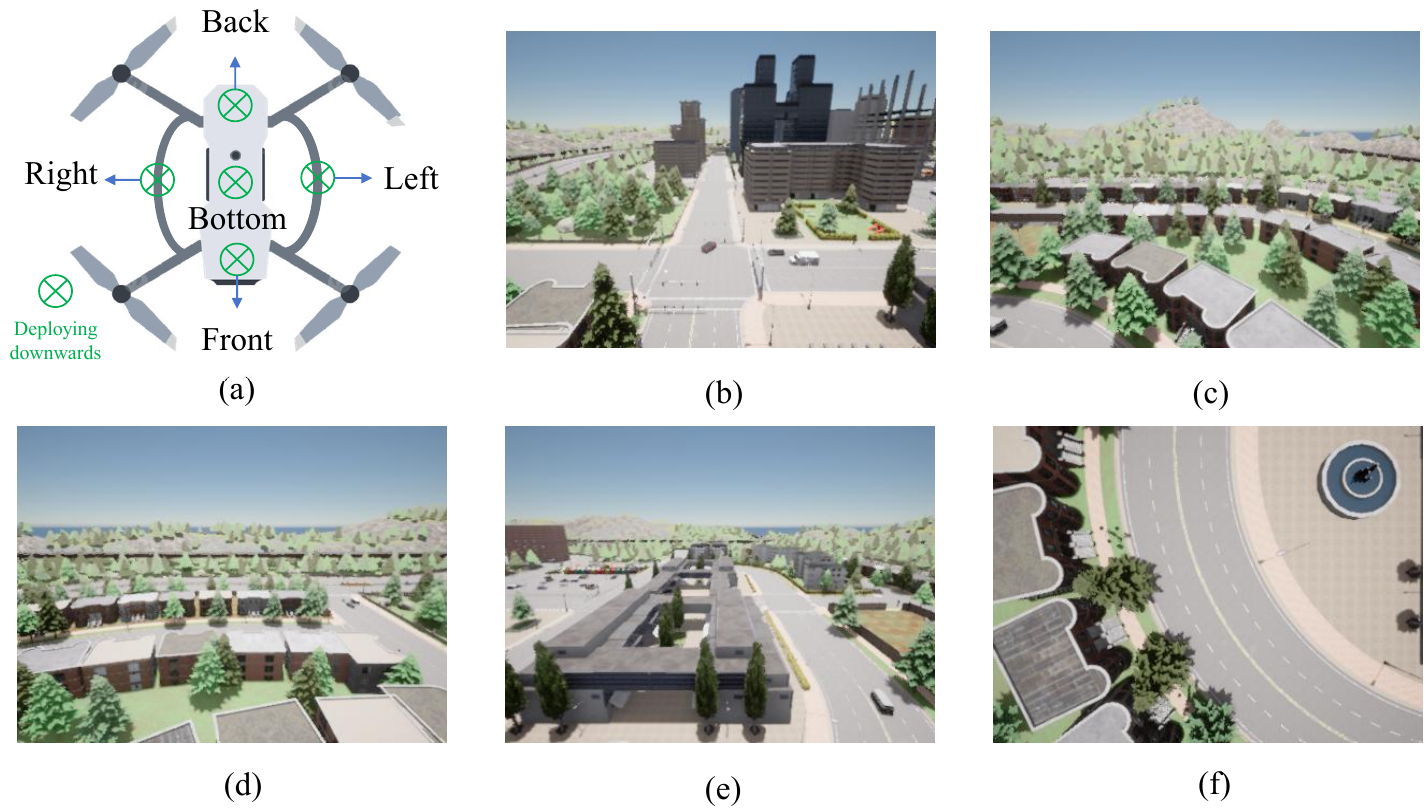}
\caption{Self-collected multi-scene multi-view UAV data rendered in the CARLA simulator. The synchronized five-view RGB observations and logged simulator poses support retrieval, feature extraction, and model training.}
\label{fig:testbed_dataset}
\vspace{-2mm}
\end{figure}

\begin{table*}[t]
\centering
\caption{Datasets and evaluation configurations used in the study.}
\label{tab:setup}
\scriptsize
\setlength{\tabcolsep}{4pt}
\begin{tabularx}{\textwidth}{@{}lllcclX@{}}
\toprule
Setting & Data path & Scenes & Frames/routes & Views & Ground truth & Main purpose \\
\midrule
Single-scene split & Conference-version Town05 & 1 scene & 4,994; 3,495/749/750 & 5 & Simulator pose & Scalable compression and ablations \\
Raw multi-scene data & CARLA UAV & 5 scenes & 6,684 frames & 5 & Simulator pose & Raw samples and retrieval database \\
Sampled benchmark & CARLA subset & 5 scenes & 4,593 frames & 5 & Simulator pose & Coarse-to-fine retrieval and plots \\
Control workload & Town02 route replay & Held-out route & 593 slots / 7 policies & Variable & Simulator pose & \(g_\psi\) request and mode control \\
Retrieval queries & Offline query set & 5 scenes & 400 queries & 5 & Simulator pose & Scene coarse + tile-pruned fine search \\
Indoor UAV & Qualisys capture & 4.1 m $\times$ 4.0 m & 76 ref. / 58 query & 5 sequential & 6-DoF mocap & Real-scene compact localization \\
Indoor UGV & Qualisys capture & 5.0 m $\times$ 4.0 m & 78 ref. / 72 query & 4 sequential & 6-DoF mocap & Real-scene compact localization \\
\bottomrule
\end{tabularx}

\vspace{-2mm}
\end{table*}

Table~\ref{tab:setup} and Fig.~\ref{fig:testbed_dataset} summarize the data and evaluation settings. Per-mode residuals and payloads come from trained-model inference. Semantic-path latency is measured on a Jetson Orin NX after feature extraction. The request-control experiments additionally simulate odometry drift, request timing, and network delay. Unless otherwise specified, the localization error is the Euclidean distance to the ground truth, and the application-layer request adds selected-view identifiers to the serialized \(4k+20\)-byte semantic representation.\footnote{The multi-view UAV dataset was collected by the authors and is released at \url{https://huggingface.co/datasets/Peter341/Multi-View-UAV-Dataset}. The code is available at \url{https://github.com/fangzr/TOC-Edge-Aerial}.}

The 4,994-frame Town05 set and the 6,684-frame five-scene collection are CARLA data with logged poses. The 4,593-frame sampled benchmark comes from the latter. Indoor UAV and UGV data use Qualisys motion-capture ground truth. Compression experiments fix the downstream localizer after training, which we call the frozen-localizer protocol. Controller configurations are selected on tuning seeds and evaluated on disjoint held-out seeds, where each seed fixes one realization of the channel, drift, request timing, and queue state.

The Town05 scalable encoder uses \(K=128\), \(\mathcal K=\{4,8,16,32,64,96,128\}\), 50 AdamW epochs, batch size 512, learning rate \(10^{-3}\), and \((\alpha,\beta,\gamma)=(1,10^{-3},0.1)\). Raw-image retrieval uses a 2 Mbps uplink. Semantic-payload experiments use 4 to 12 KB/s profiles. Request replay uses 0.5-s slots and a shared edge queue.

\subsection{Scalable Compression Performance}\label{sec:scalable_eval}

\begin{figure*}[t]
\centering
\includegraphics[width=0.98\textwidth]{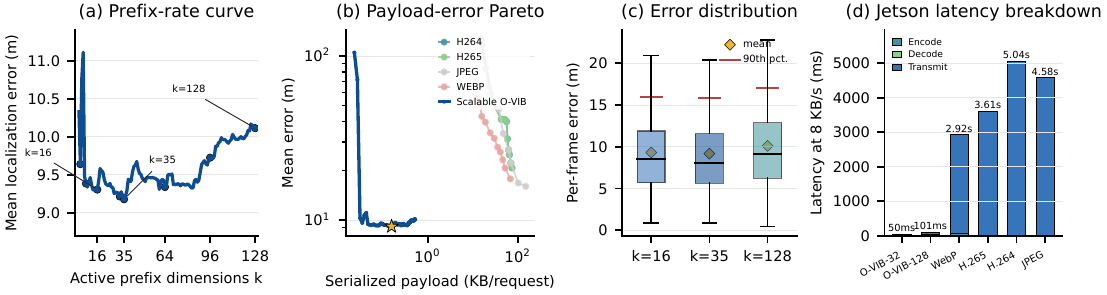}
\caption{Scalable O-VIB compression on Town05. Panel (a) evaluates one trained model over prefix dimension \(k\). Panel (b) compares serialized payload-error frontiers with image-codec experiments. Panel (c) reports per-frame errors. Diamonds mark means and red ticks mark p90 values. Panel (d) reports post-feature-extraction Jetson latency at 8 KB/s.}
\label{fig:scalable_compression}
\vspace{-2mm}
\end{figure*}

\begin{table}[t]
\centering
\caption{Deployment overhead of fixed-rate and scalable O-VIB configurations under the Jetson Orin NX latency profile.}
\label{tab:scalable_overhead}
\scriptsize
\setlength{\tabcolsep}{3.5pt}
\resizebox{\linewidth}{!}{%
\begin{tabular}{lcccc}
\toprule
Method & Models & Storage & Rates & Latency @ 8 KB/s \\
\midrule
Fixed-rate O-VIB-32 & 1 & 2.8 MB & 1 & 50.4 ms \\
Fixed-rate O-VIB-128 & 1 & 2.9 MB & 1 & 100.6 ms \\
Fixed-rate bank $\{32,128\}$ & 2 & 5.7 MB & 2 & 50.4--100.6 ms \\
Scalable O-VIB $k=35$ & 1 & 13.9 MB & 128 & 52.0 ms \\
Scalable O-VIB $k=16$ & 1 & 13.9 MB & 128 & 42.0 ms \\
\bottomrule
\end{tabular}}
\vspace{0.3mm}
\begin{minipage}{0.96\linewidth}
\footnotesize \emph{Note:} Latency combines the measured Jetson compute profile with transmission of the serialized $4k+20$-byte O-VIB payload at 8 KB/s. The Rates column counts prefix lengths supported at runtime, not independently trained models.
\end{minipage}

\vspace{-2mm}
\end{table}

Fig.~\ref{fig:scalable_compression} uses 3,495 training, 749 validation, and 750 test frames from Town05. Table~\ref{tab:scalable_overhead} reports deployment overhead. One model is evaluated at every \(k=1,\ldots,128\), so panel (a) is not interpolated. Validation selects \(k=35\), which gives 9.18 m test error at 0.156 KB/request. Prefix \(k=16\) gives 9.31 m at 0.082 KB and \(k=128\) gives 10.11 m at 0.520 KB. Panel (b) compares serialized payloads, while panel (c) checks tail errors. At 8 KB/s after feature extraction, the fixed-rate O-VIB-32 and O-VIB-128 models in Table~\ref{tab:scalable_overhead} take 50.4 and 100.6 ms, compared with 2.92 s for WebP, 3.61 s for H.265, 5.04 s for H.264, and 4.58 s for JPEG. These values recompute transmission from the serialized payload rather than the preliminary conference-version profiles. One checkpoint supports dense runtime rate adaptation.

\begin{figure*}[t]
\centering
\includegraphics[width=0.98\textwidth]{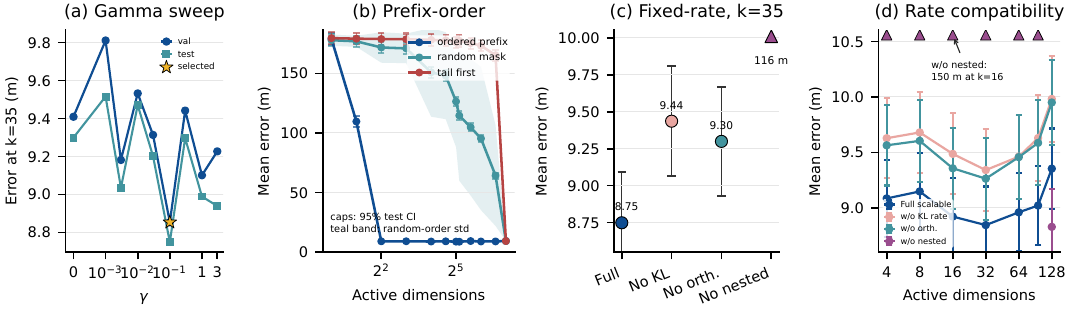}
\caption{Scalable encoder ablations on Town05 with the frozen-localizer protocol. Panel (a) sweeps \(\gamma\) at \(k=35\), with the star marking the validation-selected positive-\(\gamma\) Full model. Panel (b) reports 95\% confidence intervals over 750 test frames. The teal band additionally shows standard deviation across eight random coordinate permutations. Panel (c) compares components at \(k=35\). Panel (d) zooms into 8.6 to 10.6 m.}
\label{fig:scalable_encoder_ablation}
\vspace{-2mm}
\end{figure*}

Fig.~\ref{fig:scalable_encoder_ablation} reports separately trained ordering and component ablations under the same frozen-localizer protocol. Validation selects \(\gamma=0.1\) at \(k=35\), giving 8.85 m validation and 8.75 m test error. This checkpoint is independent of the 9.18 m model in Fig.~\ref{fig:scalable_compression}. At \(k=35\), the full model, denoted Full, obtains 8.75 m, compared with 9.44 m without the rate term, 9.30 m without orthogonality, and 116.19 m without nested-prefix supervision. Random and tail-first coordinate orderings remain worse than the learned prefix ordering until most coordinates are retained. Larger \(\gamma\) values do not improve validation error, and the model without nested-prefix training collapses at short prefixes.

\begin{figure*}[t]
\centering
\includegraphics[width=0.98\textwidth]{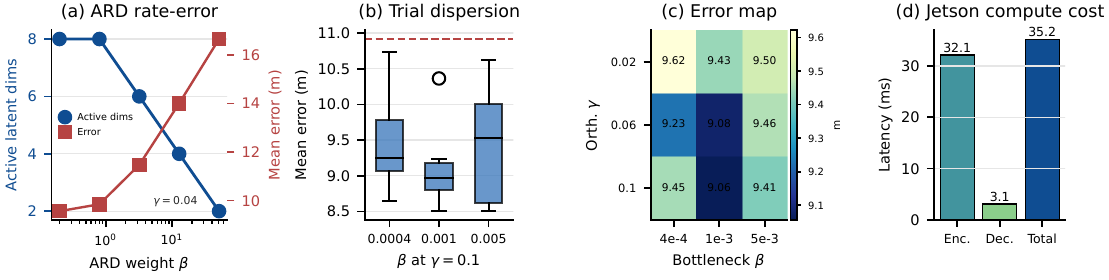}
\caption{Reproduction of the conference-version single-scene O-VIB experiments. Panel (a) shows ARD-controlled active dimensions and error. Panels (b) and (c) report the fixed-\(k=16\) sweep, and panel (d) reports Jetson-profile compute cost.}
\label{fig:single_scene_ib_ablation}
\vspace{-2mm}
\end{figure*}

Fig.~\ref{fig:single_scene_ib_ablation} summarizes the conference-version single-scene module results. It uses the same Town05 split as Fig.~\ref{fig:scalable_compression}. Panel (a) shows that increasing the ARD weight reduces the active dimensions from 8 to 2 and raises the mean error from 9.57 m to 16.65 m. In the fixed-\(k=16\) sweep, the best reproduced O-VIB setting is \(\gamma=0.1\) and \(\beta=0.001\), with 9.06 m mean error over 10 trials, compared with 10.92 m for the conference-version multi-view fusion baseline. This fixed-rate model uses the log-uniform prior, and the experiment isolates the effects of ARD and orthogonality from scalable-prefix training.

\subsection{Coarse-to-Fine Retrieval}\label{sec:retrieval_eval}

\begin{figure*}[t]
\centering
\includegraphics[width=0.98\textwidth]{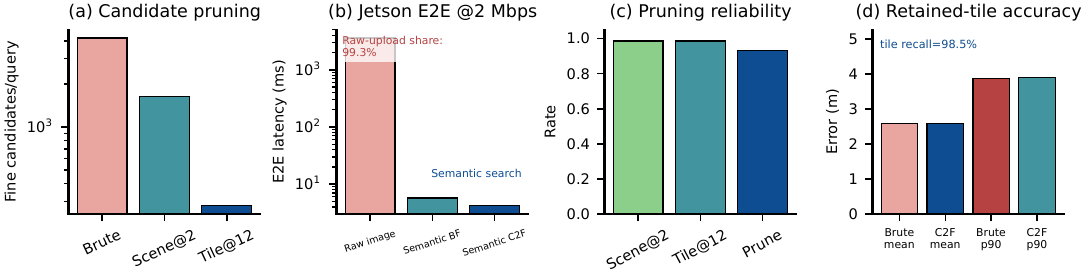}
\caption{Coarse-to-fine retrieval on the sampled multi-scene benchmark. Scene and tile prototypes reduce the fine-search set. Panel (c) reports the pruned fraction, and panel (d) reports accuracy conditional on retaining the correct tile. All-query accuracy is given in the text.}
\label{fig:retrieval_pruning}
\vspace{-2mm}
\end{figure*}

Fig.~\ref{fig:retrieval_pruning} evaluates 400 sampled multi-scene queries. The mean raw five-view payload is 891.5 KB. On a Jetson Orin NX 16 GB with a 2 Mbps uplink, raw-image offloading takes 3.59 s, with 99.3\% spent on upload. Semantic brute-force and coarse-to-fine retrieval take 5.75 and 4.21 ms after upload, with 93.3\% of fine-search descriptors pruned by the latter. The coarse stage keeps the two most similar scene prototypes and the twelve most similar tile prototypes, and the scene top-2 accuracy and tile top-12 recall are both 98.5\%. Conditional on retaining the correct tile, both methods obtain 2.60 m mean error. Over all queries, coarse-to-fine rises to 6.52 m because six coarse-stage misses select the wrong scene, whereas brute force remains at 2.60 m. The result quantifies the accuracy cost of pruning at this database size.

\subsection{View-Rate Selection}\label{sec:voi_evaluation}

\begin{figure*}[t]
\centering
\includegraphics[width=0.98\textwidth]{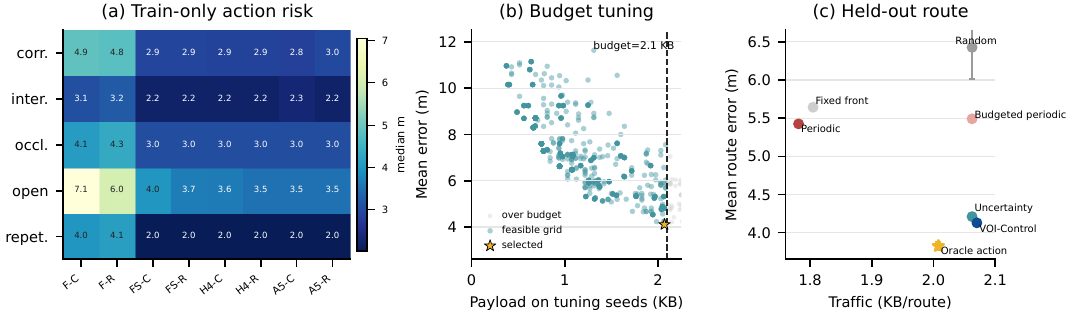}
\caption{Calibration and evaluation of VOI-Control. Panel (a) shows the train-only context/action median residual. F, FS, H4, and A5 denote front, front-plus-sides, four-horizontal, and all-five views. C and R denote \(k=16\) and \(k=32\). Panel (b) shows tuning-seed budget search, and panel (c) shows held-out policies with the oracle action.}
\label{fig:voi_predictor}
\vspace{-2mm}
\end{figure*}

\begin{figure*}[t]
\centering
\includegraphics[width=0.98\textwidth]{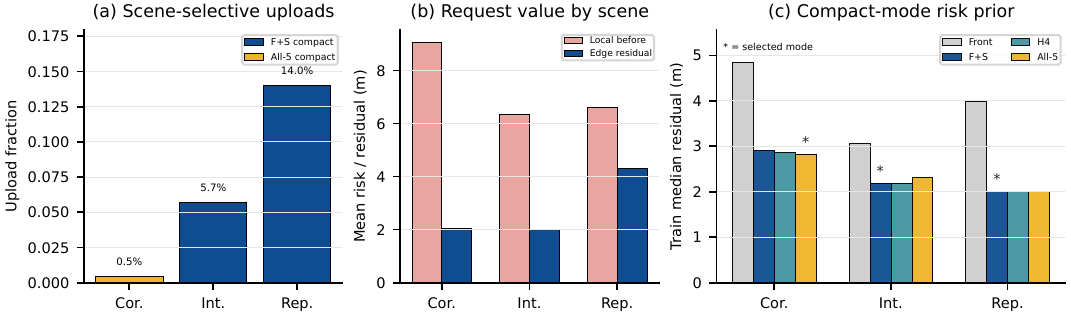}
\caption{View-mode behavior under the budget-constrained held-out route. Panel (a) reports upload fractions by compact view mode, panel (b) compares pre-request risk with returned residuals, and panel (c) shows the train-only compact-mode prior.}
\label{fig:view_rate_selection}
\vspace{-2mm}
\end{figure*}

We compare VOI-Control with five causal request policies under the 2.1 KB/route budget and include an oracle action bound. Periodic sends an FS compact request every 16 s. Budgeted periodic places the same number of FS compact requests as VOI-Control at uniformly spaced slots. Fixed-front sends F compact requests when the local risk exceeds 8 m, with a minimum gap of 1 s. Random uses the same request count and FS compact mode at uniformly sampled slots. Uncertainty sends FS compact requests when the local risk exceeds 6 m, with a minimum gap of 6 s. Oracle action uses the request times selected by VOI-Control and chooses the action with the lowest realized residual after applying the same latency and payload penalties.

Fig.~\ref{fig:voi_predictor} analyzes VOI-Control on Town02 routes containing corridor, intersection, and repetitive scenes. The prefix checkpoint is trained on frames 000000 to 000355 with masked reconstruction, four UAV view modes, and prefix set \(\{16,32,35,64,128\}\). It uses neither position labels nor held-out route frames. Candidate residuals use leave-one-out retrieval with the same-scene database, and the median prior is fitted on training routes. Tuning seeds select the lowest-risk feasible configuration, which uses compact prefixes for admitted requests. Across five held-out seeds with 95\% confidence intervals, VOI-Control obtains \(4.13\pm0.03\) m mean and \(6.71\pm0.08\) m p95 route error, compared with \(4.21\pm0.03\) m and \(7.01\pm0.10\) m for uncertainty triggering. Budgeted periodic offloading obtains \(5.49\pm0.03\) m mean and \(9.73\pm0.09\) m p95 route error, so VOI-Control reduces these two errors by 24.8\% and 31.0\%. Relative to uncertainty triggering, the reductions are 1.9\% and 4.3\%. Budgeted periodic matches the request count and compact prefix but fixes the view mode to FS, so this comparison evaluates request timing and view selection jointly. The oracle action reaches 3.82 m mean route error.

Fig.~\ref{fig:view_rate_selection} shows how the selected VOI-Control configuration spends its traffic on route segments with nonzero uploads. On the held-out route, VOI-Control sends rare all-five compact requests in corridor scenes and front-plus-sides compact requests in intersection and repetitive scenes. It uploads on 0.5\% of corridor steps, 5.7\% of intersection steps, and 14.0\% of repetitive steps. These requests reduce local risk from roughly 6 to 9 m down to 2 to 4 m, depending on the scene. The compact-mode prior in panel (c) explains the mode choices. Front-only compact has much larger residuals, while front-plus-sides and all-five compact are close to the best compact residual in scenes where uploads are selected.

\subsection{Request-Control Dynamics}\label{sec:request_eval}

\begin{figure*}[t]
\centering
\includegraphics[width=0.98\textwidth]{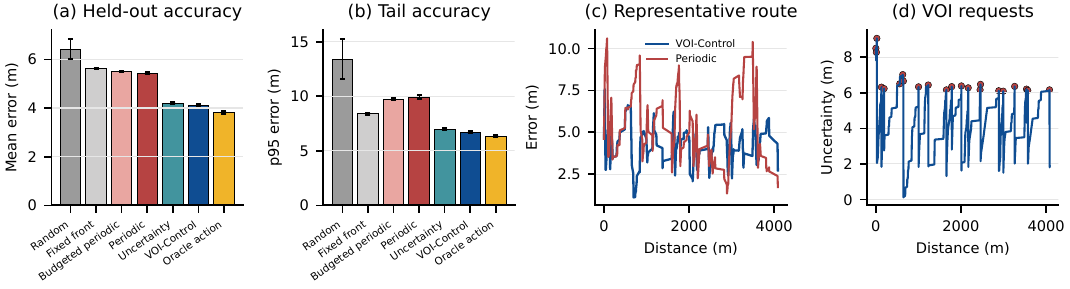}
\caption{Budget-constrained request-control dynamics on the held-out route. Panels report held-out mean error, held-out p95 error, a representative error curve, and the corresponding VOI-Control request timing. VOI-Control concentrates uploads around uncertainty growth and improves both mean and tail localization error over budgeted periodic, uncertainty-triggered, fixed-front, and random baselines.}
\label{fig:request_control}
\vspace{-2mm}
\end{figure*}

Fig.~\ref{fig:request_control} shows the same budget-constrained behavior over time. VOI-Control issues 22 requests and transmits 2.07 KB per route. Under the same budget, fixed-front obtains 5.64 m mean and 8.41 m p95 error, while random obtains 6.43 m and 13.43 m. The other policies and confidence intervals appear in Fig.~\ref{fig:voi_predictor}(c).

\subsection{Multi-Client Scheduling}\label{sec:sched_eval}

\begin{figure*}[t]
\centering
\includegraphics[width=0.98\textwidth]{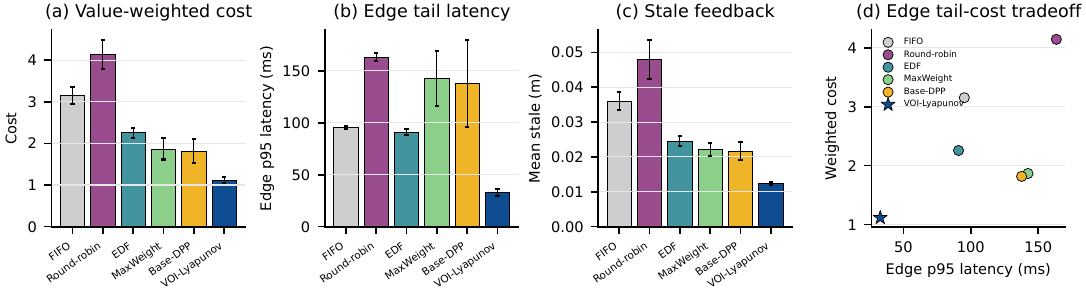}
\caption{Value-aware scheduling under high congestion (\(N=30\), 4 KB/s). Panels report cost and latency metrics for the top-10\% high-value requests.}
\label{fig:multi_client_scheduling}
\vspace{-2mm}
\end{figure*}

\begin{table}[t]
\centering
\caption{High-congestion scheduler comparison for top-10\% high-value requests.}
\label{tab:multi_client}
\scriptsize
\setlength{\tabcolsep}{1.5pt}
\begin{tabular*}{\columnwidth}{@{\extracolsep{\fill}}lrrrr@{}}
\toprule
\multicolumn{5}{c}{\textbf{\makecell{High congestion: $N=30$\\4 KB/s weak link}}} \\
\midrule
Scheduler & E2E p95 & Edge p95 & Stale & Cost \\
\midrule
MaxWeight & 198 ms & 142.4 ms & 0.022 m & 1.87 \\
Base-DPP & 193 ms & 137.7 ms & 0.022 m & 1.82 \\
\textbf{VOI-Lyap.} & \textbf{88.5 ms} & \textbf{32.8 ms} & \textbf{0.012 m} & \textbf{1.12} \\
\bottomrule
\end{tabular*}
\vspace{0.2mm}
\begin{minipage}{0.98\linewidth}
\footnotesize \emph{Note:} All table metrics, including stale-feedback penalty, are computed over the top-10\% high-value requests.
\end{minipage}

\vspace{-2mm}
\end{table}

We obtain the multi-client results in Fig.~\ref{fig:multi_client_scheduling} and Table~\ref{tab:multi_client} by replaying VOI-Control request events from the held-out route on a single shared edge server. The replay covers low, medium, and high congestion, corresponding to \(N=10\), \(N=20\), and \(N=30\) clients with weak-link profiles of 12, 8, and 4 KB/s. The figure and table focus on high congestion, where service order matters most. Each client count uses five scheduler seeds that perturb route phase, mission urgency, deadlines, and service-time jitter while preserving trained-model payloads and localization residuals. The weighted cost is \(\widehat v_jD_j/D_j^{\mathrm{ddl}}\), where \(D_j\) is the response latency and \(D_j^{\mathrm{ddl}}\) is the request deadline. The stale-feedback penalty is \(r_j^{\mathrm{drift}}D_j^{\mathrm{stale}}/1000\) meters, where \(D_j^{\mathrm{stale}}\) is the observation-to-correction delay in milliseconds. The top-10\% weighted cost is computed over the largest decile of \(\widehat v_j\).

Each application-layer request is 0.094 to 0.102 KB and contains the latent prefix, a 20-byte representation header, and one 4-byte identifier per selected view. The corresponding p95 transmission times are 8.3, 12.7, and 25.4 ms. For each request, the replay adds the Jetson O-VIB-32 encoding profile and weak-link transmission time to edge queueing and service latency before computing the end-to-end p95.

Base-DPP uses \(V=8\) in \eqref{eq:voi_scheduler}, while VOI-Lyapunov applies \eqref{eq:voi_scheduler_impl}. We also compare first-in-first-out (FIFO), earliest-deadline-first (EDF), and MaxWeight, which sets \(V=0\) in \eqref{eq:voi_scheduler}. For the top-10\% high-value requests, VOI-Lyapunov reduces the weighted cost to 1.12, compared with 1.82 for Base-DPP and 1.87 for MaxWeight. It lowers edge-side p95 latency from 137.7 ms for Base-DPP to 32.8 ms and end-to-end p95 latency to 88.5 ms; the corresponding Base-DPP and MaxWeight end-to-end values are 193.0 ms and 197.8 ms. Across all high-congestion requests, VOI-Lyapunov gives 185.7 ms edge and 241.1 ms end-to-end p95, compared with 232.5/288.0 ms for Base-DPP and 217.8/273.2 ms for MaxWeight. FIFO and EDF give 93.2/148.8 ms and 106.7/162.5 ms for edge/end-to-end p95, respectively, because they prioritize latency rather than request value. The comparison therefore separates value-aware service from latency-first scheduling.

\subsection{Real-Scene Multi-Platform Evaluation}\label{sec:real_scene_evaluation}

\begin{figure}[t]
\centering
\includegraphics[width=0.75\columnwidth]{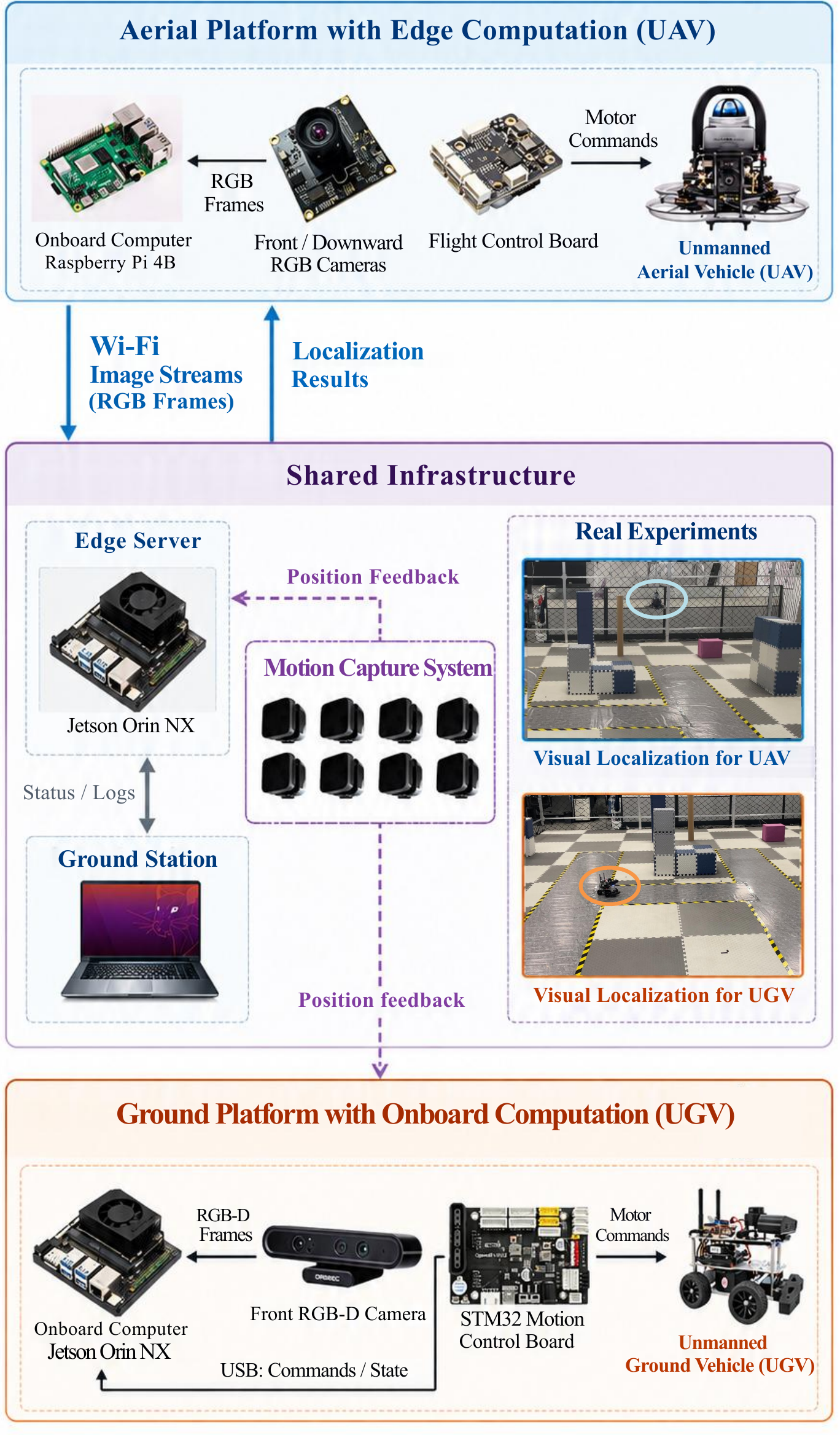}
\caption{Real-world UAV/UGV testbed and edge-assisted localization pipeline. The platforms transmit visual inputs over Wi-Fi, while Qualisys provides pose ground truth.}
\label{fig:real_testbed_photos}
\vspace{-2mm}
\end{figure}

\begin{figure*}[t]
\centering
\includegraphics[width=0.98\textwidth]{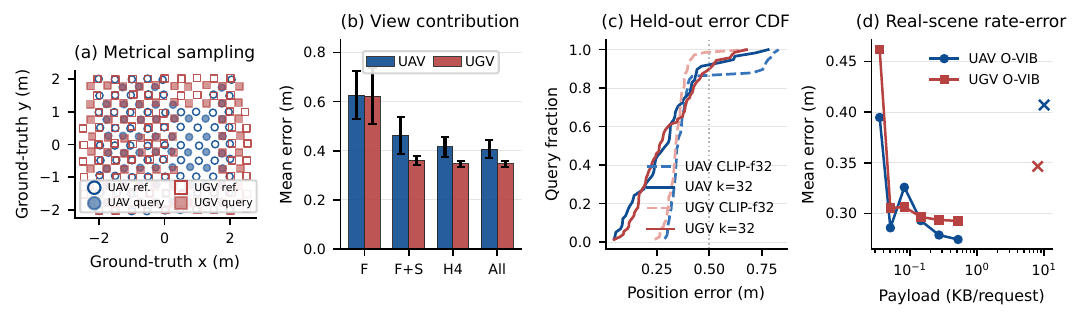}
\caption{Real-scene localization with motion-capture ground truth. (a) Reference and disjoint query locations: circles denote UAV, squares denote UGV, and open/filled markers denote reference/query samples. (b) Mean validation-calibrated frozen-CLIP retrieval error for front only (F), front plus sides (F+S), all four horizontal directions (H4), and all available views (All). Caps are 95\% confidence intervals over queries. (c) Held-out query-error CDFs for all-view CLIP retrieval and O-VIB at \(k=32\). (d) Mean query error versus the actual serialized representation size. Crosses denote the uncompressed float32 CLIP descriptors.}
\label{fig:real_system}
\vspace{-2mm}
\end{figure*}

We further evaluate O-VIB on two indoor datasets collected with the UAV and UGV testbed in Fig.~\ref{fig:real_testbed_photos}. The UAV set contains 76 reference and 58 query locations with five pose-aligned views. The UGV set contains 78 reference and 72 query locations with four views. The horizontal directions were captured sequentially near each target pose rather than by a synchronized camera array, while the UAV front and downward views were captured together. Qualisys supplies six-degree-of-freedom pose ground truth while the platforms send observations to the edge server for localization. Reference samples form the localization database and are split spatially into encoder-training and validation subsets. The query samples remain disjoint and are used once for reporting. The mixing weight \(\eta\) between O-VIB regression and retrieval is selected on the reference validation subset for each prefix. The CLIP baseline uses pure nearest-neighbor retrieval with nonnegative view-fusion weights calibrated on five spatial folds of the reference split. The weights are renormalized over the views kept by each view mode.

Fig.~\ref{fig:real_system}(b) first isolates the contribution of additional views without learned compression. Relative to front-only retrieval, the calibrated H4 mode reduces the mean error from 0.626 to 0.415 m for the UAV, and the calibrated All mode further reduces it to 0.407 m. For the UGV, the corresponding reduction is from 0.622 m to 0.346 m with all four views. The mean distance from each query to its nearest reference location is 0.322 m for the UAV and 0.316 m for the UGV. This spacing lower-bounds the mean error of pure retrieval, which always returns a reference location, while direct regression can interpolate below it.

\begin{table}[!t]
\centering
\caption{Held-out real-scene localization for representative O-VIB prefixes.}
\label{tab:real_ablation}
\footnotesize
\setlength{\tabcolsep}{1.2pt}
\begin{tabular*}{\columnwidth}{@{\extracolsep{\fill}}llrrrr@{}}
\toprule
Plat. & Representation & KB & Mean & p90 & R@0.5 \\ 
 & & & (m) & (m) & \\ 
\midrule
UAV & CLIP & 10.000 & 0.407 & 0.758 & 86.2\% \\
UAV & O-VIB-8 & 0.051 & 0.286 & 0.458 & 89.7\% \\
UAV & O-VIB-32 & 0.145 & 0.293 & 0.447 & 91.4\% \\
UAV & O-VIB-128 & 0.520 & 0.274 & 0.459 & 93.1\% \\
UGV & CLIP & 8.000 & 0.346 & 0.396 & 98.6\% \\
UGV & O-VIB-8 & 0.051 & 0.305 & 0.519 & 88.9\% \\
UGV & O-VIB-32 & 0.145 & 0.296 & 0.501 & 88.9\% \\
UGV & O-VIB-128 & 0.520 & 0.293 & 0.481 & 90.3\% \\
\bottomrule
\end{tabular*}
\vspace{0.3mm}
\begin{minipage}{\columnwidth}
\scriptsize \textbf{Note:} Reference samples train O-VIB, form the localization database, and calibrate the frozen-CLIP view-fusion weights; metrics use the disjoint query split. R@0.5 denotes recall at 0.5 m position error. O-VIB payload includes a 20-byte representation header.
\end{minipage}

\vspace{-2mm}
\end{table}

Table~\ref{tab:real_ablation} and Figs.~\ref{fig:real_system}(c) and (d) report the held-out inference results. At the rich prefix \(k=32\), the deployed O-VIB pipeline obtains 0.293 m UAV and 0.296 m UGV mean error using 0.145 KB/request, compared with 0.407 m at 10 KB and 0.346 m at 8 KB for validation-calibrated all-view CLIP retrieval. Relative to the CLIP retrieval baseline, the deployed O-VIB pipeline lowers mean error by 28.0\% and 14.4\% while reducing descriptor traffic by 98.6\% and 98.2\%, respectively. The largest prefix further lowers the mean errors to 0.274 and 0.293 m. For the UGV, coordinate regression removes grid quantization for most queries but introduces several larger residuals, which explains why O-VIB has a higher p90 error and a lower recall at 0.5 m, denoted R@0.5, than CLIP in Table~\ref{tab:real_ablation}. Latency is characterized by the separate Jetson measurements rather than by this test.

\section{Conclusion}\label{sec:conclusion}
In this paper, we have presented a task-oriented communication framework that combines scalable multi-view O-VIB encoding, VOI-guided request and mode control, and value-aware edge scheduling. One nested encoder supports runtime prefix adaptation from 8 to 128 dimensions, and its \(k=35\) Town05 operating point reaches 9.18 m mean error with a 0.156 KB semantic representation per request. Under the same traffic budget, VOI-Control reaches 4.13 m mean route error, compared with 5.49 m for budgeted periodic offloading and 4.21 m for uncertainty triggering, and reduces the p95 error of budgeted periodic offloading by 31.0\%. Under high congestion, bounded waiting-age and deadline shaping can lower the edge-side p95 latency for the top-10\% high-value requests from 137.7 ms for Base-DPP to 32.8 ms. The indoor motion-capture evaluation further shows that a 0.145 KB semantic representation reaches about 0.29 m mean error on both UAV and UGV observations. These results show that semantic rate, request timing, and edge service order can be coordinated through task value. Future work will extend the framework to online context adaptation and live multi-robot closed-loop deployment.



%


\bibliographystyle{./IEEEtran}
\bibliography{ref}

\end{document}